\newtheorem{theorem}{Theorem}
\newcommand{\jj}{\mathrm{j}}
\newcommand{\Ex}[1]{ \mathcal{E} \left[  #1 \right]}
\newcommand{\dbc}[1]{ \left[  #1 \right]}
\newcommand{\tr}[1]{\mathrm{tr} \left\lbrace #1 \right\rbrace}
\newcommand{\set}[1]{ \left\lbrace #1 \right\rbrace}
\newcommand{\brc}[1]{ \left(  #1 \right)}
\newcommand{\abs}[1]{  \left\vert  #1 \right\vert}
\newcommand{\norm}[1]{  \left\Vert  #1 \right\Vert}
\newcommand{\btau}{\boldsymbol{\psi}}
\newcommand{\bbeta}{\boldsymbol{\beta}}
\newcommand{\her}{\mathsf{H}}
\newcommand{\xx}{{\mathrm{x}}}
\newcommand{\yy}{{\mathrm{y}}}
\newcommand{\trp}{{\mathsf{T}}}
\newcommand{\maE}{{\mathcal{E}}}
\newcommand{\mPhi}{\boldsymbol{\Phi}}
\newcommand{\mSigma}{\boldsymbol{\Sigma}}
\newcommand{\bmu}{\boldsymbol{\mu}}
\newcommand{\setC}{{\mathbb{C}}}
\newcommand{\mR}{{\mathbf{R}}}
\newcommand{\mC}{{\mathbf{C}}}
\newcommand{\mV}{{\mathbf{V}}}
\newcommand{\mI}{{\mathbf{I}}}
\newcommand{\setR}{{\mathbb{R}}}
\newcommand{\bh}{{\mathbf{h}}}
\newcommand{\bv}{{\mathbf{v}}}
\newtheorem{lemma}{Lemma}
\newtheorem{remark}{Remark}
\newtheorem{proposition}{Proposition}
\newtheorem{pproposition}{Pseudo-Proposition}
\begin{document}
\title{Channel Hardening of IRS-Aided Multi-Antenna Systems: How Should IRSs Scale?}

\author{
\IEEEauthorblockN{
Ali Bereyhi, \textit{Member, IEEE}, 
Saba Asaad, \textit{Member, IEEE}, 
Chongjun Ouyang, \textit{Student Member, IEEE}, 
Ralf R. M\"uller, \textit{Senior Member, IEEE}, 
Rafael F. Schaefer, \textit{Senior Member, IEEE} and 
H. Vincent Poor, \textit{Life Fellow, IEEE}}
\IEEEauthorblockA{
\thanks{ Saba Asaad, Ali Bereyhi and Ralf Müller are with the Friedrich-Alexander Universit\"at; emails: \{saba.asaad, ali.bereyhi,ralf.r.mueller\}@fau.de. Chongjun Ouyang is with the Beijing University of Posts and Telecommunications; email: dragonaim@bupt.edu.cn. H. Vincent Poor is with the Princeton University; email: poor@princeton.edu.}
\thanks{This work was supported by the German Research Foundation, Deutsche Forschungsgemeinschaft (DFG), under Grant No. MU 3735/7-1, and in part by the German Federal Ministry of Education and Research (BMBF) under the Grant 16KIS1242.}}}


\IEEEoverridecommandlockouts

\maketitle

\begin{abstract}
	$\;$Unlike active array antennas, intelligent reflecting surfaces (IRSs) are efficiently implemented at large dimensions. This allows for traceable realizations of large-scale \textit{IRS-aided} MIMO systems in which not necessarily the array antennas,~but the passive IRSs are large. It is widely believed that large IRS-aided MIMO settings maintain the fundamental features of massive MIMO systems, and hence they are the implementationally feasible technology for establishing the performance of large-scale MIMO settings. This work gives a rigorous proof to this belief. We show that using a large passive IRS, the end-to-end MIMO channel between the transmitter and the receiver always hardens, even if the IRS elements are strongly correlated. 
	
For the fading direct and reflection links between the transmitter and the receiver, our derivations demonstrate that as the number of IRS elements grows large, the capacity of end-to-end channel converges in distribution to a real-valued Gaussian~random~variable whose variance goes to zero. The order of this drop depends on how the physical dimensions of the IRS grow. We derive this order explicitly. Numerical experiments depict that the analytical asymptotic distribution almost perfectly matches the histogram of the capacity, even in practical scenarios.
	
	As a sample application of the results, we use the asymptotic characterization to study the dimensional trade-off between the transmitter and the IRS. The result is intuitive: For a given target performance, the larger the IRS is, the less transmit antennas are required to achieve the target. For an arbitrary ergodic and outage performance, we characterize this trade-off analytically. Our investigations demonstrate that using a practical IRS size, the target performance can be achieved with significantly small end-to-end MIMO dimensions.

\end{abstract}
\begin{IEEEkeywords}
Reconfigurable intelligent surface, asymptotic channel hardening, large-system analysis, ergodic capacity, outage probability.
\end{IEEEkeywords}
\IEEEpeerreviewmaketitle
\section{Introduction}
Channel hardening is a property of massive \ac{mimo} systems indicating that the random fading process in a \ac{mimo} channel becomes a deterministic effective path-loss, as the  dimensions of the \ac{mimo} channel grow large at least at one side; see for instance \cite{hochwald2004multiple,bai2009rate,asaad2018massive}. From the information-theoretic point of view, this is the key property of massive \ac{mimo} systems which leads to their~significant~performance gains in various aspects. These gains are rather well-studied in the literature; e.g., \cite{bjornson2017massive, asaad2018optimal,narasimhan2014channel}. 

Despite the promising performance, massive \ac{mimo} systems are still considered to be implementationally intractable. This follows from the fact that deploying large active antenna arrays imposes high hardware cost and energy consumption, as well as unrealistic design requirements \cite{bjornson2016massive, bjornson2015massive, bereyhi2018stepwise}. The issue becomes more crucial in practice, since several recent studies~show~that the fundamental features of massive \ac{mimo} systems\footnote{These features are mainly described by the channel hardening.} do not hold with practical dimensions in several settings, e.g.,  \ac{mimo} systems with highly-correlated channels, scenarios with small scattering angular spread and cell-free networks; some detailed studies can be found in \cite{caire2018ergodic} and \cite{ngo2015cell}. 

\Ac{irs}-aided \ac{mimo} communication has been recently proposed as an implementationally-tractable solution to realize large-dimensional \ac{mimo} systems \cite{di2019smart, yu2020robust, bereyhi2020secure}. An \ac{irs} is a two-dimensional surface, with a large number of reconfigurable passive reflective elements. These elements are capable of tuning the wireless propagation dynamically to achieve various design objectives: For instance, in indoor millimeter-wave applications, by installing \acp{irs} out of the transmission site, e.g., on interior walls or room ceilings, the blockage issue of the millimeter-wave communication can be reduced significantly, due to the link established between the transmitter and the users through the \acp{irs} \cite{cao2020intelligent}. As other examples of design targets which are achieved by employing \acp{irs} in the system, we can name spatial interference suppression, accurate three-dimensional beamforming, and providing a more favorable propagation environment \cite{di2020smart, nadeem2019intelligent,zhang2021reconfigurable, wu2019towards}.

As mentioned, unlike classical large-scale \ac{mimo} technologies which require large active arrays of antennas, large \ac{irs}-aided \ac{mimo} systems, i.e., \ac{mimo} settings with large reflecting surfaces, are implementationally feasible. This follows the fact that \acp{irs} \textit{passively} reflect the incoming signals and do not employ radio frequency chains. As a result, the hardware cost and energy consumption at large scale decrease notably \cite{gong2020toward, wu2021intelligent}. Inspired by this appealing advantages, \acp{irs} are employed to a develop practically-tractable alternative designs for several large-scale \ac{mimo} technologies, such as massive \ac{mimo} \cite{li2019joint, wang2021massive}, cognitive radio \cite{xu2020resource, wu2021irs}, \ac{noma} \cite{ding2020simple, ding2020impact} and simultaneous wireless information and power transform systems \cite{tang2020joint, zargari2020energy}. The key motivation behind these designs comes from this widely-accepted belief: \textit{The key features of a large-scale \ac{mimo} technology can be achieved via a small-dimension \ac{mimo} setting assisted by large \acp{irs}.} Although under idealistic assumptions this belief seems to be intuitive, it is not easy to conclude its validity in practical~\ac{irs}-aided settings with highly-correlated \ac{irs} elements and multiple propagation links between the transmitter and the receiver. This work aims to give a rigorous proof for this believe by considering the fundamental channel hardening property in a practical \ac{irs}-aided multi-antenna setting.

\subsection{Main Objective and Contributions}
This work gives an answer to this intriguing question:~\textit{How does the end-to-end channel in \ac{irs}-aided multi-antenna systems harden, when only the \ac{irs} size grows large?} We answer this question by deriving the asymptotic distribution of the channel capacity expression in an \ac{irs}-aided fading \ac{miso} system\footnote{It is worth mentioning that the results readily extend to cases with multiple receive antennas. A \ac{miso} setting is assumed mainly to keep the derivations tractable. This assumption however impacts neither the analytical approach nor the final conclusions of this study.} whose number of transmit antennas is fixed and rather small, when the number of \ac{irs} elements grow asymptotically \textit{large}. Our derivations demonstrate that for any \ac{irs} covariance matrix, the channel capacity converges in distribution to a  real-valued Gaussian distribution whose mean increases unboundedly large, and whose variance tends to zero, as the \ac{irs} size grows asymptotically large.

The analytical derivations of this study gives the following answer to our target question: \textit{Even with a strongly-correlated \ac{irs}, the end-to-end channel hardens regardless of the transmit array size\footnote{In fact, it hardens asymptotically with only a single transmit antenna.}}. Interestingly, this property is achieved \textit{without} any need for frequent tuning of \ac{irs} elements.

In addition to the main results, this study has several other contributions which are briefly highlighted below:
\begin{itemize}
\item We characterize the distribution of the channel capacity for a generic \ac{irs}-aided fading \ac{miso} setting in which both direct and reflecting links are available between the transmitter and receiver. We further take into account the impact of line-of-sight channels and consider an arbitrary scaling of the \ac{irs} area and a general correlation among the reflecting elements. To the best of our knowledge,~the characterization of channel hardening for such setting has been left open in the literature; see for instance the recent studies in \cite{jung2020performance} and \cite{zhang2021outage}.
	\item For a given \ac{irs} and transmit array architecture,~we~determine the \textit{exact} distribution of the end-to-end \ac{snr} for an arbitrary choice of phase-shifts at the \ac{irs}. We give a universal upper-bound for the mean and variance of this distribution, and propose an effective choice of \ac{irs} phase-shifts that guarantees the end-to-end channel hardening.
	\item We validate our analytical derivations through numerical experiments. Our investigations show that the asymptotic distribution fits almost perfectly the histogram of channel capacity for practical \ac{irs} sizes\footnote{And even much smaller sizes of \ac{irs}.}. We further~conduct~several numerical experiments to investigate channel hardening for classical and extreme \ac{irs} correlation scenarios.
	\item To characterize the speed of channel hardening, we derive a lower-bound on the mean and a uniform upper-bound~on the variance of the channel capacity. Invoking the bounds, we show that for any covariance matrix~at~the~\ac{irs}~whose largest eigenvalue grows \textit{sub-linearly}\footnote{The definition of sub-linear growth becomes clear in the forthcoming sections.} with the \ac{irs} size, the limit of the variance, when the \ac{irs} size grows large, is zero. Using extreme-case investigations, we demonstrate that this conclusion is further extended to most extreme scenarios with the largest eigenvalue growing \textit{linearly} and the physical dimension of \ac{irs} being fixed.
	\item As a sample application, we use the main results to investigate the dimensional trade-off between the transmitter and the \ac{irs}. The result is intuitive: \textit{By increasing the \ac{irs} size, the transmit array size required to achieve a given target performance reduces}. We derive the trade-off curve analytically two performance metrics, namely the ergodic capacity and outage probability.
\end{itemize}

\subsection{Related Work}
Channel hardening in \ac{irs}-aided \ac{mimo} systems is studied in the literature for some restricted settings: In \cite{bjornson2020rayleigh}, channel hardening is discussed in an isotropic scattering environment considering an \ac{irs}-aided setting with a single-antenna transmitter and receiver, i.e., a \ac{siso} setting. The analysis is extended to scenarios with multiple \acp{irs} and \textit{uncorrelated Rician} fading in \cite{zhang2019analysis}. The concept of channel hardening is further addressed in an alternative way in \cite{ibrahim2021exact}, where the authors show that in an \ac{irs}-aided \ac{siso} system with generic Nakagami-$m$ fading channels, the end-to-end channel between the transmitter and the receiver, in the presence of a direct link, becomes nearly deterministic. The results of this study further demonstrate that increasing the number of \ac{irs} elements, as well as reducing the fading severity, enhances the channel hardening property. 

In addition to classical point-to-point settings, channel hardening has been further discussed in alternative scenarios. For instance, channel hardening is investigated in \cite{wang2021performance} for an \ac{irs}-aided \ac{siso} \ac{noma} system, where the \ac{irs} is assumed to be realized by the intelligent omni-surfaces technology, and the reflecting elements are considered to be correlated in general. The results show that for such settings, the average achievable rate converges asymptotically to that of uncorrelated channels, when the number of \ac{irs} elements goes to infinity. Another example is a \ac{mimo} setting with a fully stand-alone \ac{irs}-based transmitter. For such settings, the authors of \cite{jung2020performance} show that by growing the size of the transmitter unit, the user channels become deterministic and mutually orthogonal. 

Despite connections to the above literature, the most related lines of work to our study are those given in \cite{wang2020intelligent,wang2021massive} and \cite{zhang2021outage}: Starting with the study in \cite{wang2020intelligent}, the authors show that the conventional form of channel hardening is not valid for an \ac{irs}-aided \ac{mimo} setting. The result depicts that by increasing the number of transmit antennas, while keeping the number of \ac{irs} elements fixed, the hardening property does not hold. This finding follows from the fact that the channel between the \ac{irs} and transmitter is identical for all receivers. As a result, by growth of transmit array size, the randomness of the fading process grows large, and hence the effective end-to-end channel does not converge to a deterministic channel.  In their following work, i.e.,  \cite{wang2021massive}, the authors invoke the Lindeberg-Feller central theorem \cite{van2000asymptotic} to demonstrate that the channel hardening in \ac{irs}-aided systems occurs when the \textit{number of \ac{irs} elements} grows large\footnote{The difference of \cite{wang2021massive} to this work is illustrated shortly after.}. 

The study in \cite{zhang2021outage} derives the distribution of the input-output mutual information for an \ac{irs}-aided \ac{mimo} setting in which the communication is carried out \textit{only} through the \ac{irs}, i.e., there is no direct link between the transmitter and receiver. The analysis invokes random matrix theory to characterize the setting in an asymptotic regime, in which both the numbers of \ac{irs} elements and transmit antennas grow large with a fixed-ratio. The authors further utilize the results to derive a closed-form expression for the outage probability and the optimal values of \ac{irs} phase-shifts. 

Although the studies in \cite{zhang2021outage,wang2021massive} extend earlier results on channel hardening to a wider scope of settings, they consider several simplifying assumptions which impact the validity of the final conclusions in practical scenarios. For instance, they consider the channel coefficients of reflecting elements to be \textit{uncorrelated} and ignore the line-of-sight between the \ac{irs} and other terminals in the network. These assumptions are rather unrealistic, as in many use-cases of \ac{irs}-aided systems, the reflecting elements are assumed to be closely packed on the \ac{irs}, and the end-nodes are located in the line-of-sight of the \ac{irs} \cite{han2019large}. Moreover, \cite{zhang2021outage} assumes the complete blockage of the direct path between the transmitter and the receiver which despite its validity in some use-cases, restricts the applicability of the results and makes the comparison between a large \ac{irs}-aided \ac{mimo} system and a massive \ac{mimo} system unfair. The asymptotic regime considered in \cite{zhang2021outage}  is further impractical, as we are often interested in \ac{irs}-aided \ac{mimo} settings with large \ac{irs} dimensions, but rather small arrays at the transmitter and receiver sides. In this work, we deviate from these simplifying assumptions and characterize the asymptotic channel hardening principle for a generic \ac{irs}-aided scenario.

%


\subsection{Notation and Organization}
Scalars, vectors and matrices are shown by non-bold, bold lower-case, and bold upper-case letters, respectively. The notation $\mathbf{H}^{\mathsf{H}}$ indicates the transposed conjugate of $\mathbf{H}$. An $N\times N$ identity matrix is denoted by $\mathbf{I}_N$ and $\boldsymbol{1}_N$ is an $N\times N$ matrix of all-ones. We use the notation $\dbc{\mathbf{H}}_{nm}$ to refer to the entry~of $\mathbf{H}$ at the $n$-th row and $m$-th column. The function $\mathrm{Q}\brc{x}$ is the standard $\mathrm{Q}$-function, i.e.,
\begin{align}
	\mathrm{Q} \brc{x} = \frac{1}{\sqrt{2\pi}}\int_{x}^{+\infty} \mathrm{e}^{-\tfrac{u^2}{2}} \mathrm{d} u.
\end{align}
 The mathematical expectation is denoted by $\Ex{\cdot}$, and the notation $\mathcal{CN}\left( \eta,\sigma^2\right) $ represents the complex Gaussian distribution with mean $\eta$ and variance $\sigma^2$. 

The rest of the manuscript is organized as follows: Section~\ref{sec:sys} describes the system model and formulates the problem. The main analytical results along with several numerical investigations are then presented in Section~\ref{sec:main}. The dimensional trade-off between the transmitter and the \ac{irs} is investigated in Section~\ref{sec:trade}. Section~\ref{sec:derive} provides the detailed derivations of the main results. Finally, Section~\ref{sec:conc} concludes the manuscript.


\section{Problem Formulation}
\label{sec:sys}
Consider a \ac{miso} system in which a \ac{bs} with an array-antenna of size $M$ transmits data to a single-antenna user. An \ac{irs} with $N$ reflecting elements is further employed to modify the propagation environment between the \ac{bs} and the user: Each element of the \ac{irs} reflects its received signal after applying a phase-shift.~The~signal received by the user is then the superposition of two components: One that is received through the direct path between the \ac{bs} and the user, and one that is reflected by the \ac{irs}.

\subsection{System Model}
As suggested by the literature \cite{bjornson2020rayleigh,zhang2021reconfigurable}, we assume that the \ac{bs} and the \ac{irs} are equipped with planar arrays. Namely, we assume the antenna-array at the \ac{bs} and the array of reflecting elements on the \ac{irs} are rectangular planar arrays with $M_\xx$ and $N_\xx$ horizontal elements and $M_\yy$ and $N_\yy$ vertical elements, respectively, such that $M=M_\xx M_\yy$ and $N=N_\xx N_\yy$. The antenna elements at the \ac{bs} are assumed to be distanced with $\ell_\xx$ and $\ell_\yy$ on the horizontal and vertical axes, respectively. The horizontal and vertical distances at the \ac{irs} are further denoted by $d_\xx$ and $d_\yy$, respectively.

To model the direct path\footnote{Note that the direct path is different from the line of sight.}, we consider a classical scenario in which the \ac{los} link is blocked via obstacles or mobility of the user. It is further assumed that $\ell_\xx$ and $\ell_\yy$ are set large enough, such that the spatial correlation can be ignored\footnote{This is a rational assumption, as we do not assume $M$ to be large.}. We therefore adopt the Rayleigh fading model in which the channel coefficient of the direct path between antenna element $m$ at the \ac{bs} and the user is modeled as 
\begin{align}
	h_{\mathrm{d},m}=\sqrt{\alpha_\mathrm{d} A_M } \tilde{h}_{\mathrm{d},m}.
\end{align}
Here, $\alpha_\mathrm{d}$ is a distance-dependent path-loss, $A_M$ denotes the area of a single element on the planar antenna array~at~the \ac{bs}, i.e.,  $A_M = \ell_\xx \ell_\yy$, and $\tilde{h}_{\mathrm{d},m}$ follows a circularly-symmetric complex Gaussian distribution with zero mean and unit variance, i.e., $\tilde{h}_{\mathrm{d},m}\sim \mathcal{CN}\left(0, 1\right)$. For sake of brevity, we define the direct channel vector as $\mathbf{h}_\mathrm{d}=[h_{\mathrm{d},1}, \cdots, h_{\mathrm{d},M}]^\trp$.

Due to the flexibility in deploying the \ac{irs}, it is reasonable to assume that the link between the \ac{bs} and the \ac{irs} is dominated by a \ac{los} component. We denote the channel spanning from the \ac{bs} to the \ac{irs} with ${\mathbf{T}}\in \mathbb{C}^{N \times M}$  where $t_{nm}=[{\mathbf{T}}]_{nm}$
denotes the channel coefficient between the $m$-th transmit antenna  and the $n$-th \ac{irs} element. In particular, $t_{mn}$ is given by
\begin{align}
	t_{nm}=\sqrt{\alpha_{\mathrm{s}} A_N }\bar{t}_{nm},
\end{align}
where $\alpha_{\mathrm{s}}$ represents the distance-dependent path-loss, $A_N$ is the area of a single reflecting element, i.e., $A_N = d_\xx d_\yy$, and $\bar{t}_{nm}$ denotes the \ac{los} components.

Unlike the direct and the \ac{bs}-to-\ac{irs} links, the link between the user and the \ac{irs} has often both \ac{los} and \ac{nlos} components. Moreover, the reflecting elements at the \ac{irs} are often closely distanced, such that the given area of the \ac{irs} is filled with a large number of elements. As a result, the spatial correlation in this channel cannot be ignored. We hence use the Rician fading model with spatial correlation to model this channel: Let $h_{\mathrm{r}, n}$ denote the coefficient of the channel between the $n$-th reflecting element and the user. This coefficient is modeled as
\begin{align} \label{eq:hr} 
	h_{\mathrm{r}, n}=\sqrt{\alpha_{\mathrm{ r}} A_N }\left(\sqrt{\frac{\kappa_\mathrm{r}}{\kappa_\mathrm{r}+1}}\bar{h}_{\mathrm{r}, n} +\sqrt{\frac{1}{\kappa_\mathrm{r}+1}}\tilde{h}_{\mathrm{r}, n}\right), 
\end{align}
where $\alpha_{\mathrm{ r}}$ and $\kappa_\mathrm{r}$ are  the distance-dependent path-loss and the Rician factor\footnote{In general,  larger $\kappa_\mathrm{r}$ means that the channel is more deterministic.}, respectively. The coefficient $\bar{h}_{\mathrm{r}, n}$ represents~the \ac{los}, and $\tilde{h}_{\mathrm{r}, n}$ models the small-scale fading process in the \ac{nlos} link. To capture the spatial correlation among the \ac{irs} elements, we consider a general covariance matrix $\mR\in \setC^{N\times N}$ for the vector $\tilde{\mathbf{h}}_{\mathrm{r}}=[\tilde{h}_{\mathrm{r}, 1}, \cdots, \tilde{h}_{\mathrm{r}, N}]^\trp$ which depends on $d_\xx$ and $d_\yy$, distribution of scatterers, and the radiation pattern. For the case with isotropic scattering in front of the \ac{irs}, one can determine $\mR$ explicitly from \cite[Proposition~1]{bjornson2020rayleigh}. Consequently, $\tilde{\mathbf{h}}_{\mathrm{r}}$ is modeled as a zero-mean complex Gaussian vector with covariance $\mR$, i.e., $\tilde{\mathbf{h}}_{\mathrm{r}} \sim \mathcal{CN}\left(\boldsymbol{0}, \mR\right)$. Note that due to power normalization, we assume that $[\mR]_{nn} =1$ for $n\in[N]$. 
 

\begin{remark}
	In general, the effective area of a single element on an array depends on the wave-length, and hence the given expressions for the array areas, i.e., $A_N = d_\xx d_\yy$ and $A_M=\ell_\xx \ell_\yy$, are not exact if the elements are distances longer than a half wave-length. Nevertheless, we assume the neighboring elements on the transmit array and the \ac{irs} to be distanced less than a half wave-length, i.e., $d_\xx,d_\yy,\ell_\xx,\ell_\yy \leq \lambda/2$.
\end{remark}

The \ac{los} components are described by the array responses: Let $\lambda$ be the wavelength and define the following operators 
\begin{subequations}
	\begin{align}
		i_M\brc{m} &=\brc{m-1} \;  \mathrm{mod} \; M_\xx,  \\
		i_N\brc{n} &= \brc{n-1}  \;  \mathrm{mod} \;  N_\xx, 
\end{align}
\end{subequations}
where $x \;  \mathrm{mod} \;  L$ determines $x$ modulo $L$. Moreover, let
\begin{subequations}
	\begin{align}
j_M\brc{m} &= \left\lfloor \displaystyle \frac{m-1}{M_\xx} \right\rfloor , \\
j_N\brc{n} &= \left\lfloor \displaystyle \frac{n-1}{N_\xx} \right\rfloor.
\end{align}
\end{subequations}
 We now define an exponent function at azimuth angle $\varphi$ and elevation angle $\theta$ for a given element $m$ on the \ac{bs} array and element $n$ on the \ac{irs} array, respectively, as follows:
\begin{subequations}
	\begin{align}
		\Phi_m\left( \varphi,\theta\right) &= {i_M\brc{m} \ell_\xx \cos{\theta} \sin{\varphi}  +  j_M\brc{m} \ell_\yy \sin{\theta} },\\
		\Pi_n\left( \varphi,\theta\right) &={i_N\brc{n} d_\xx \cos{\theta} \sin{\varphi} +  j_N\brc{n} d_\yy \sin{\theta} },
	\end{align}
\end{subequations}
Consequently, the transmit and \ac{irs} array responses are given at $\brc{\varphi,\theta}$ respectively by \cite{bjornson2020rayleigh}
\begin{subequations}
	\begin{align}
		\mathbf{a}_M\left( \varphi,\theta\right) &= \left[ \mathrm{e}^{   \tfrac{2\pi\jj }{\lambda} \Phi_1\left( \varphi,\theta\right)  }, \cdots, \mathrm{e}^{  \tfrac{2\pi\jj }{\lambda} \Phi_M\left( \varphi,\theta\right) } \right]^\trp,\\
		\mathbf{a}_N\left( \varphi, \theta\right) &= \left[ \mathrm{e}^{ \tfrac{2\pi\jj }{\lambda}  \Pi_1\left( \varphi,\theta\right)  }, \cdots, \mathrm{e}^{  \tfrac{2\pi\jj }{\lambda}  \Pi_N\left( \varphi,\theta\right) } \right]^\trp.
	\end{align}
\end{subequations}

Let $\bar{\mathbf{T}} \in \setC^{N\times M}$ be the matrix of \ac{los} channel coefficients between the \ac{bs} and the \ac{irs} whose entry $\brc{n,m}$ is $\bar{t}_{nm}$. Define further the \ac{los} channel vector between the \ac{irs} and the user as $\mathbf{\bar{h}}_{\mathrm{r}}=[\bar{h}_{\mathrm{r}, 1}, \cdots, \bar{h}_{\mathrm{r}, N}]^\trp$. We can hence write
\begin{subequations}
\begin{align}
	\bar{\mathbf{T}} &= \mathbf{a}_N\left( \varphi_{\mathrm{r}1},\theta_{\mathrm{r}1}\right) \mathbf{a}_M\left( \varphi_{\mathrm{t}2} , \theta_{\mathrm{t}2} \right)^\mathsf{H}, \label{a1}\\
	\mathbf{\bar{h}}_{\mathrm{r}} &= \mathbf{a}_N\left( \varphi_{\mathrm{t}1} , \theta_{\mathrm{t}1} \right), \label{a2}
\end{align}
\end{subequations}
where $\brc{\varphi_{\mathrm{r}1},\theta_{\mathrm{r}1}}$ is the \ac{aoa} at the \ac{irs}, the pair $\brc{\varphi_{\mathrm{t}1} , \theta_{\mathrm{t}1} }$ denotes the \ac{aod} from the \ac{irs}, and $\brc{\varphi_{\mathrm{t}2} , \theta_{\mathrm{t}2} }$ is the \ac{aod} from the \ac{bs}.
	
The signal received by the user is given by
\begin{align}\label{eq:rx}
	y=\sum_{m=1}^{M} h_m x_m + z,
\end{align}
where $z$ is zero-mean and unit-variance complex Gaussian noise, and $x_m$ denotes the symbol sent by the $m$-th antenna element at the \ac{bs}. The transmitted symbols satisfy
\begin{align}
	 \sum_{m=1}^{M} \Ex{\abs{x_m}^2} \leq \rho,
\end{align}
for some total transmit power $\rho$. The coefficient $h_m$ in \eqref{eq:rx} defines for the \textit{end-to-end} effective channel between the $m$-th \ac{bs} antenna and the user which is given by
\begin{align}
	h_m
	&=h_{\mathrm{d}, m}+\sum_{n=1}^{N} \mathrm{e}^{-\jj\beta_n} h_{\mathrm{r}, n} t_{nm}, 
\end{align}
where $\beta_n$ denotes the phase-shift with element $n$ at the \ac{irs}.
%
%
%


\subsection{Asymptotic Scaling of IRS}
We intend to characterize the \ac{irs}-aided setting in a large-system limit in which the transmit-array size $M$ remains fixed and the number of reflecting elements at the \ac{irs} grows large. In general, the growth in the number of reflecting elements can impact the physical dimension of the \ac{irs} in various forms. To illustrate this point, let $A_{\rm IRS}$ denote the total area of the \ac{irs}. Assuming the reflecting elements to be symmetrically inserted on the surface, we can write
\begin{align}
	A_{\rm IRS} =  N A_N.
\end{align}

The scaling of $A_{\rm IRS}$ in terms of $N$ is illustrated at best by considering the two most extreme scenarios.
\begin{itemize}
	\item One extreme case is to assume that the number of \ac{irs} elements grows large and the distances between~neighboring elements on the \ac{irs} are kept fixed and large enough, e.g., $d_\xx = d_\yy = \lambda/2$. In this case, $A_N$ is fixed,~and~therefore the total area of the \ac{irs} grows linearly in $N$. Depending on the wave-length, this scaling can lead to unrealistic physical dimensions of \ac{irs} for large choices of $N$. 
	\item Another extreme case is when we set the total area of the \ac{irs} fixed, and reduce the distance between the neighboring elements by the growth of $N$. In this case. $A_{\rm IRS}$ does not scale in $N$, and $A_N$ shrinks reverse-linearly with $N$. In practice, however, the physical dimensions of a typical reflecting element cannot be set below a certain threshold. Hence, for a very large choice of $N$, assuming this extreme case can also be unrealistic.
\end{itemize}

In practice, with larger numbers of elements on the~\ac{irs},~the distance among neighboring elements is reduced. This leads to a smaller effective area for a reflecting element and therewith to a higher spatial correlation. The distance can however be reduced down to a certain level. We can hence conclude that the scaling of the \ac{irs} area is in general neither linear nor fixed, but of some intermediate order; see some related discussions in \cite{ivrlac2010toward,ivrlac2014multiport}. To take this point into account, we consider a basic scaling model for the \ac{irs} area. Namely, we assume that the area of each \ac{irs} element scales with $N$ as %
\begin{align}
A_N = \frac{A_0}{N^{q}}
\end{align}
for some constant $A_0$ and $0\leq q \leq 1$. Consequently, the total area of the \ac{irs} scales with $N$ as
\begin{align}
	A_{\rm IRS} =  A_0 N^{1-q}.
\end{align}
This basic scaling addresses the limiting scenarios in between the two extreme cases.
\begin{itemize}
	\item $q=0$ is the idealistic case which corresponds to the first extreme case, i.e., the case in which the distances between neighboring elements are kept fixed.
	\item $q=1$ addresses the latter case in which the total area of the \ac{irs} is fixed.
\end{itemize}
By setting $0< q < 1$, an intermediate scaling is considered in which the \ac{irs} surface grows sub-linearly by $N$.

It is worth mentioning that for $q\neq 1$, this scaling model implies that by sending $N\rightarrow \infty$, the area of \ac{irs} also grows asymptotically large. This observation can lead to this conclusion that with $q\neq 1$, the far-field derivation for the \ac{irs} array response is no longer valid in the large-system limit. To avoid such inconsistency, let $d_{\min}$ denotes the minimum distance between two terminals in the network, i.e., the minimum of the \ac{bs}-to-user, \ac{bs}-to-\ac{irs} and \ac{irs}-to-user distances. We assume that $d_{\min}$ is bounded uniformly from below as
\begin{align}
	d_{\min} > D_0N^{{\gamma}/{2}},
\end{align}
for some $D_0$ and $\gamma > 1-q$. By considering this assumption, we guarantee that the far-field derivations are valid through the asymptotic analyses. From the implementational point of view, this is a realistic assumption, as the distances are often in orders of tens or hundreds of meters while the \ac{irs} dimensions are often a finite multiple of the wave-length.

\subsection{Input-Output Mutual Information}
Let $\mathbf{h}=[h_1, \cdots, h_M]^\trp$ denote the end-to-end channel vector. Considering Gaussian signaling, the mutual information between the input and output of the end-to-end channel, for a given realization of $\mathbf{h}$, is given by
\begin{align}
	\mathcal{I}\left( \mathbf{h}, \rho\mathbf{Q}\right) =\log_2 \left(1+ \rho \mathbf{h}^\mathsf{H} \mathbf{Q}\mathbf{h}\right), 
\end{align}
where $\mathbf{Q} \in \setC^{M \times M}$ is the transmit covariance matrix. From the power constraint, we know that $\mathbf{Q}$ satisfies $\tr{\mathbf{Q}} = 1$. 

In this channel, the optimal covariance matrix which maximizes the mutual information is given by \ac{mrt} \cite{tse2005fundamentals,loyka2017capacity}. We hence focus on the optimal case which determines the capacity of the end-to-end channel. To this end, we define the maximum mutual information\footnote{We show it by $\mathcal{C}$, as it determines the end-to-end \textit{capacity}.} as
\begin{align}\label{eq:I}
	\mathcal{C} =\log_2 \left(1+ \rho M \Gamma \right),
\end{align}
which determines the end-to-end channel capacity. Here, $\Gamma$ is given by
\begin{align}
 \Gamma = \frac{ \norm{ \mathbf{h} }^2}{M}.
\end{align}
In the sequel, we refer to $\Gamma$ as the \textit{end-to-end \ac{snr} gain}~per transmit antenna. Due to fading, $\Gamma$ and $\mathcal{C}$ are random. Our main goal is to characterize the statistics of these random variables analytically, when $N$ is asymptotically large.

\section{Asymptotic Channel Hardening}
\label{sec:main}
Intuitively, channel hardening in massive \ac{mimo} systems refers to the following phenomenon: A fading \ac{mimo} channel behaves almost deterministically, as the number of antennas at one end grows large\footnote{In general, it is enough to have a large antenna array at one end.}. This fundamental property is characterized~in~the seminal work \cite{hochwald2004multiple} in which a tight approximation for the distribution of the input-output mutual information is determined. The result shows that the variance of the mutual information shrinks rapidly while its mean grows large. 

Invoking the literature, it can be straightforwardly shown that the end-to-end channel $\mathbf{h}$ hardens, as the number of transmit antennas $M$ grows unboundedly large. We are however interested in a different asymptotic regime; namely, a scenario with \textit{finite} transmit antennas but unboundedly \textit{large number of \ac{irs} elements}, i.e., fixed $M$ and $N\rightarrow\infty$. Our main result shows that in this alternative asymptotic regime, the channel still hardens, under a very mild constraint on the correlation among the \ac{irs} elements.

\begin{proposition}
	\label{th:0}
	Let the area of the \ac{irs} scale with $N$ as $A_{\rm IRS} =  A_0 N^{1-q}$ for some fixed $A_0$ and $0\leq q <1$, and the phase-shifts be set to 
	\begin{align} 
	\beta_n^\star= \frac{2\pi }{\lambda} \brc{\Pi_n\left( \varphi_{\mathrm{r}1} ,\theta_{\mathrm{r}1} \right)  + \Pi_n\left( \varphi_{\mathrm{t}1} ,\theta_{\mathrm{t}1} \right)}.  \label{eq:beta_n_star}
\end{align}
	Assume the maximum eigenvalue of the \ac{irs} covariance matrix $\mR$, denoted by $\lambda_{\max}$, grows with $N$ sub-linearly, i.e.,
	\begin{align}
		\lim_{N\rightarrow\infty} \frac{\lambda_{\max}}{N} = 0,
	\end{align}
 and satisfies
		\begin{align}
		\lim_{N\rightarrow\infty} \frac{1}{\lambda_{\max} A_{\rm IRS}} = 0. \label{eq:const_IRS}
	\end{align}
	 Then, the maximum mutual information $\mathcal{C}$ is well approximated by a real Gaussian random variable whose mean and~standard~deviation are given by
	\begin{subequations}
		\begin{align}
			\mu_{\mathcal{C}} &=   \log_2 \left(1 +\rho M \mu \right) \\
			\sigma_{\mathcal{C}} &= \frac{ \displaystyle \rho M  \log_2 \mathrm{e} }{1+ \rho M \mu }  \sqrt{ \omega \eta + \eta + \frac{M-1}{M} \alpha_{\mathrm{ d}} A_M }
		\end{align}
	\end{subequations}
	respectively, for
\begin{subequations}
		\begin{align}
		\mu &=  \alpha_{\mathrm{ d}} A_M+ \kappa_{\mathrm{ r} } \bar{\alpha}_N N^2 + \bar{\alpha}_N \mathbf{\bar{h}}_{\mathrm{r}}^\her \mR \mathbf{\bar{h}}_{\mathrm{r}} ,\\
		\eta &=  \frac{\alpha_{\mathrm{ d}}A_M}{M} +  \bar{\alpha}_N \mathbf{\bar{h}}_{\mathrm{r}}^\her \mR \mathbf{\bar{h}}_{\mathrm{r}} ,\\
		\omega &=  2\kappa_{\mathrm{ r} } \bar{\alpha}_N N^2 + \bar{\alpha}_N \mathbf{\bar{h}}_{\mathrm{r}}^\her \mR \mathbf{\bar{h}}_{\mathrm{r}} ,
	\end{align}
\end{subequations}
	 and $\bar{\alpha}_N$ being defined as
	\begin{align}
	\bar{\alpha}_N=  \frac{\alpha_{\mathrm{ r}} \alpha_{\mathrm{ s}} }{1+\kappa_\mathrm{r}} A_N^2  . \label{eq:bar_alpha}
	\end{align}
\end{proposition}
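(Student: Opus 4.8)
The plan is to collapse $\mathcal{C}$ into a scalar functional of a single complex Gaussian variable and then push a Gaussian approximation through the logarithm. First I would substitute the prescribed phase-shifts $\beta_n^\star$ of \eqref{eq:beta_n_star} into $h_m = h_{\mathrm{d},m}+\sum_n \mathrm{e}^{-\jj\beta_n}h_{\mathrm{r},n}t_{nm}$. Using $\bar t_{nm}=[\mathbf{a}_N(\varphi_{\mathrm{r}1},\theta_{\mathrm{r}1})]_n[\mathbf{a}_M(\varphi_{\mathrm{t}2},\theta_{\mathrm{t}2})]_m^{*}$, $\bar h_{\mathrm{r},n}=[\mathbf{a}_N(\varphi_{\mathrm{t}1},\theta_{\mathrm{t}1})]_n$, and the definition of $\beta_n^\star$, all $\Pi_n$-phases cancel and every IRS element contributes coherently, leaving $h_m = \sqrt{\alpha_{\mathrm{d}}A_M}\,\tilde h_{\mathrm{d},m} + \mathrm{e}^{-\jj\psi_m}c$ with $\psi_m=\tfrac{2\pi}{\lambda}\Phi_m(\varphi_{\mathrm{t}2},\theta_{\mathrm{t}2})$ and
\begin{align}
	c := \sqrt{\kappa_{\mathrm{r}}\bar\alpha_N}\,N + \sqrt{\bar\alpha_N}\,g, \qquad g := \mathbf{\bar{h}}_{\mathrm{r}}^{\her}\tilde{\mathbf{h}}_{\mathrm{r}} \sim \mathcal{CN}\big(0,\,\mathbf{\bar{h}}_{\mathrm{r}}^{\her}\mR\,\mathbf{\bar{h}}_{\mathrm{r}}\big). \nonumber
\end{align}
Two structural facts drive everything: the reflected contribution has the same modulus $\abs{c}$ at every antenna, and after the alignment it retains exactly one complex-Gaussian degree of freedom, namely $g$. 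Hence $\norm{\mathbf{h}}^2 = \sum_{m}\abs{\sqrt{\alpha_{\mathrm{d}}A_M}\,\mathrm{e}^{\jj\psi_m}\tilde h_{\mathrm{d},m}+c}^2$ is, conditionally on $c$, a sum of $M$ i.i.d.\ noncentral terms.

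Second, I would obtain the first two moments by conditioning on $\tilde{\mathbf{h}}_{\mathrm{r}}$ (equivalently on $g$, hence $c$). Conditionally, $\mathbf{h}$ is complex Gaussian with a deterministic mean of squared norm $M\abs{c}^2$ and covariance $\alpha_{\mathrm{d}}A_M\mathbf{I}_M$, so $\norm{\mathbf{h}}^2$ is a scaled noncentral $\chi^2$ variable with conditional mean $M\abs{c}^2+M\alpha_{\mathrm{d}}A_M$ and conditional variance $2\alpha_{\mathrm{d}}A_M\,M\abs{c}^2+M\alpha_{\mathrm{d}}^2A_M^2$. Expanding $\abs{c}^2 = \kappa_{\mathrm{r}}\bar\alpha_N N^2 + 2\sqrt{\kappa_{\mathrm{r}}}\,\bar\alpha_N N\,\Re\{g\} + \bar\alpha_N\abs{g}^2$ and using the Gaussianity of $g$ gives $\Ex{\abs{c}^2}$ and $\mathrm{Var}(\abs{c}^2)$ in terms of $\mathbf{\bar{h}}_{\mathrm{r}}^{\her}\mR\,\mathbf{\bar{h}}_{\mathrm{r}}$; the law of total expectation and variance then deliver $\Ex{\norm{\mathbf{h}}^2}=M\mu$ and a closed form for $\mathrm{Var}(\norm{\mathbf{h}}^2)$. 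Collecting the $N$-dominant pieces reproduces $\mu$, $\eta$, $\omega$ and hence $\mu_{\mathcal{C}}$, $\sigma_{\mathcal{C}}$ after the final linearization, with the residual lower-order contributions absorbed into the approximation.

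Third --- the crux --- I would prove that $\mathcal{C}$ is asymptotically Gaussian. Under $A_{\rm IRS}=A_0N^{1-q}$ with $q<1$ one has $\kappa_{\mathrm{r}}\bar\alpha_N N^2 = \tfrac{\kappa_{\mathrm{r}}\alpha_{\mathrm{r}}\alpha_{\mathrm{s}}}{1+\kappa_{\mathrm{r}}}A_{\rm IRS}^2 \to \infty$, while $\bar\alpha_N\,\mathbf{\bar{h}}_{\mathrm{r}}^{\her}\mR\,\mathbf{\bar{h}}_{\mathrm{r}} \le \bar\alpha_N\lambda_{\max}N$ is of strictly smaller order since $\lambda_{\max}/N\to0$; the condition \eqref{eq:const_IRS} controls the relative weight of this NLOS term against the Rayleigh floor $\alpha_{\mathrm{d}}A_M$. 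Consequently, in $\abs{c}^2$ the fluctuation is governed by the \emph{linear} term $2\sqrt{\kappa_{\mathrm{r}}}\,\bar\alpha_N N\,\Re\{g\}$, which is an exact Gaussian, whereas the quadratic remainder $\bar\alpha_N\abs{g}^2$ and the direct/cross terms have standard deviation of lower order. Thus $\norm{\mathbf{h}}^2 = M\mu + (\text{dominant Gaussian}) + (\text{asymptotically negligible remainder})$, so $\norm{\mathbf{h}}^2/(M\mu)\to1$ in probability and the centered fluctuation is asymptotically normal with the variance computed above.

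Finally, I would transfer this through $\mathcal{C}=\log_2(1+\rho\norm{\mathbf{h}}^2)$ by the delta method: since $\rho M\mu\to\infty$ and $\norm{\mathbf{h}}^2$ concentrates tightly at $M\mu$, a first-order Taylor expansion gives $\mathcal{C} = \log_2(1+\rho M\mu) + \tfrac{\rho\log_2\mathrm{e}}{1+\rho M\mu}\big(\norm{\mathbf{h}}^2-M\mu\big) + (\text{negligible})$, yielding $\mu_{\mathcal{C}}$ and $\sigma_{\mathcal{C}}$. The main obstacle is the Gaussianity step: $\norm{\mathbf{h}}^2$ is a noncentral $\chi^2$ in finitely many Gaussians, not a sum of many independent terms, so no Lindeberg/Lyapunov argument applies directly; the normality instead stems from line-of-sight dominance forcing the leading fluctuation to be linear in the single Gaussian $g$, and making this precise requires quantitative control of the quadratic $\abs{g}^2$ remainder and of the linearization gap in the logarithm --- exactly where the hypotheses $\lambda_{\max}/N\to0$ and $1/(\lambda_{\max}A_{\rm IRS})\to0$ enter. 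A minor additional point is that the far-field array-response model must remain valid along the limit, which is guaranteed by $d_{\min}>D_0N^{\gamma/2}$ with $\gamma>1-q$.
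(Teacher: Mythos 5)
Your proposal is correct and is, at its core, the same argument the paper uses, just parametrized differently. The paper reaches the identical structure in Theorem~\ref{th:1} by an eigendecomposition of the $M\times M$ channel covariance: $\Gamma$ splits into a central chi-square with $2M-2$ degrees of freedom plus a noncentral chi-square whose noncentrality $F_0\brc{\bbeta}/\Lambda\brc{\bbeta}$ diverges, and Lemma~\ref{lem:3} then shows the surviving fluctuation is the cross term, linear in a single complex Gaussian --- exactly your ``line-of-sight dominance forces the leading fluctuation to be linear in $g$.'' Your conditioning on $g = \mathbf{\bar{h}}_{\mathrm{r}}^\her\tilde{\mathbf{h}}_{\mathrm{r}}$ plus the law of total variance reproduces the paper's moment formulas, and your use of $s^2:=\mathbf{\bar{h}}_{\mathrm{r}}^\her\mR\mathbf{\bar{h}}_{\mathrm{r}}\leq\lambda_{\max}N$ with $\lambda_{\max}/N\to 0$ to kill the quadratic $\abs{g}^2$ remainder is precisely where the sublinearity hypothesis enters the paper (via the Rayleigh-quotient bound of Lemma~\ref{lem:2}); the final delta-method step matches the paper's Taylor expansion in \eqref{eq:taylor}.

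One imprecision to fix: the cross term between the direct channel and the coherent reflected mean is \emph{not} of lower order in general. Its contribution to $\mathrm{Var}\brc{\norm{\mathbf{h}}^2}$ is $2\alpha_{\mathrm{d}}A_M M\,\Ex{\abs{c}^2}\approx 2\alpha_{\mathrm{d}}A_M M\kappa_{\mathrm{r}}\bar{\alpha}_N N^2$, which is comparable to --- and can dominate --- the $\Re\set{g}$ contribution $2M^2\kappa_{\mathrm{r}}\bar{\alpha}_N^2N^2s^2$; indeed both survive in $\sigma_{\mathcal{C}}$ (the former is the $\omega\cdot\alpha_{\mathrm{d}}A_M/M$ piece of $\omega\eta$). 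This does not break your argument, because conditionally on $c$ that cross term is Gaussian, it is independent of $g$, and $\abs{c}^2$ concentrates around its mean, so the limit is still normal --- but your ``dominant Gaussian'' must be the sum of these two independent linear-in-Gaussian terms, not the $\Re\set{g}$ term alone. The paper avoids this bookkeeping by rotating with the unitary $\mV$ so that the direct channel's component along the LOS direction and the reflected fluctuation are absorbed into a single complex Gaussian $r_M$ before the cross term is isolated.
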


\begin{proof}
The proof is given in Section~\ref{proof:Prop}.
\end{proof}

Proposition~\ref{th:0} considers three main constraints on the system setting; namely, it restricts the scaling order of the \ac{irs} area, assumes sub-linearly growing $\lambda_{\max}$ and the constraint in \eqref{eq:const_IRS}. In this respect, it is worth mentioning few remarks.
\begin{itemize}
	\item 
	Proposition~\ref{th:0} excludes the limiting case of $q=1$, for analytical rigor. Nevertheless, as we see in the forthcoming section, the analytical expressions are still rather accurate for $q=1$.
	\item Constraining the growth rate of the maximum eigenvalue $\lambda_{\max}$ to be sub-linear can be interpreted as follows: $\lambda_{\max}$ is uniformly bounded from above as $\lambda_{\max} \leq aN^u$ for some real scalars $a$ and $0\leq u <1$. This is in general not a strong constraint, as we can write
	\begin{align}
		\lambda_{\max} \leq \tr{\mR} = N.
	\end{align}
As we illustrate later on, the exponent $u$~is~mutually~coupled with $q$. This is seen by considering the two extreme cases of $q$. For $q=0$, it is feasible to have $\mR=\mI_N$, and hence $\lambda_{\max} = 1$ meaning that $u=0$. The other extreme case is $q \rightarrow 1$ in which $\mR$ tends to a rank-one matrix. In this case $\lambda_{\max} = N$, and hence the uniform upper-bound is valid with\footnote{Similar to $q$, we exclude the linear case, i.e., $u=1$ for analytical rigor.} $a = u =1$. For less rank-deficient covariance matrices,  $u$ can be something in $\dbc{0,1}$. Similar to the constraint on the physical dimension of the \ac{irs}, we show later that despite assuming $0\leq u <1$, Proposition~\ref{th:0} is still valid for the extreme case of $u=1$. 
\item  Considering the scaling order of the \ac{irs} area $A_{\rm IRS}$ and maximum eigenvalue $\lambda_{\max}$, the constraint in \eqref{eq:const_IRS} restricts $u$ and $q$ to satisfy $u\geq q$.  One may find this constraint intuitively valid following from the above discussions on the coupling of $u$ and $q$. We later on show that this is in fact the case for the covariance matrix derived for the standard Rayleigh model in \cite{bjornson2020rayleigh}.
\end{itemize}

\subsection{Numerical Investigations}
\label{sec:Numerical}
Before starting with derivations, let us confirm the accuracy of Proposition~\ref{th:0} for practical system dimensions through~some numerical experiments. To this end, we consider a basic scenario in which the \ac{bs} is equipped with a $2 \times 2$ planar array and the \ac{irs} surface consists of $N=256$ reflecting elements which are aligned on a rectangular surface with $N_\xx = 8$ horizontal elements and $N_\yy = 32$ vertical elements. The elements at both \ac{bs} and \ac{irs} are distanced with $\ell_\xx = d_\xx = \ell_\yy = d_\yy = \lambda /2$.

For sake of brevity, we set $\alpha_{\mathrm{ d}} A_M = \alpha_{\mathrm{ r}} A_N= \alpha_{\mathrm{ s}} A_N=1$, and $\log \kappa_{\mathrm{ r} } = 0$ dB. For the covariance matrix $\mR$, we consider a Rayleigh fading model and invoke \cite[Proposition~1]{bjornson2020rayleigh}. The entry $\brc{n,n'}$ of $\mR$ is hence given by
\begin{align}
	\dbc{ \mR }_{n n'} = \mathrm{sinc} \brc{\frac{2}{\lambda} \sqrt{D_\xx^2 + D_\yy^2 } } \label{eq:RR}
\end{align}
with $D_\xx$ and $D_\yy$ being
\begin{subequations}
	\begin{align}
	D_\xx &= d_\xx \dbc{i_N\brc{n} - i_N\brc{n'}},\\
	D_\yy &= d_\yy \dbc{j_N\brc{n} - j_N\brc{n'}}.
	\end{align}
\end{subequations}
The \ac{aoa} and \acp{aod} are further set to $\brc{\varphi_{\mathrm{r}1},\theta_{\mathrm{r}1}} = \brc{\pi/6,\pi/3 }$, $\brc{\varphi_{\mathrm{t}1},\theta_{\mathrm{t}1}} = \brc{\pi/8,2\pi/3 }$ and $\brc{\varphi_{\mathrm{t}2},\theta_{\mathrm{t}2}} = \brc{\pi/7,\pi/5 }$. The power constraint is set to $\rho=1$. 

For this setting, we collect $10^5$ realizations of the channel and determine the input-output mutual information for every realization. The histogram of the collected samples is shown in Fig.~\ref{fig:hist}. As the figure shows, the histogram closely matches the Gaussian distribution. We now plot the analytical distribution given by Proposition~\ref{th:0} and compare it to the Gaussian distribution fitted to the histogram. The result is shown in Fig.~\ref{fig:dist}. As the figure demonstrates, the analytical result of Proposition~\ref{th:0} matches almost perfectly to the empirical density. 

 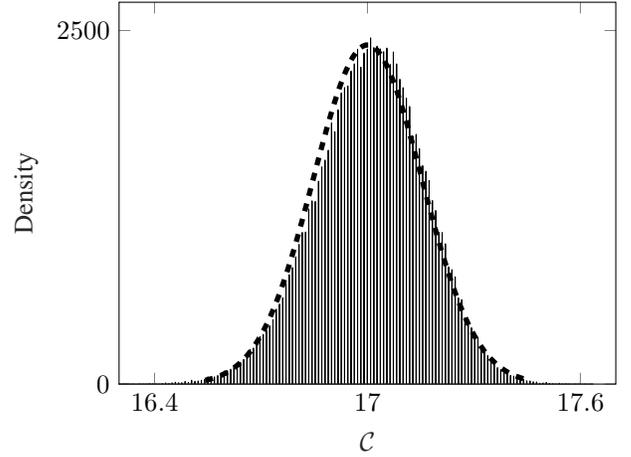
\begin{figure}
	\centering
%
%
\definecolor{mycolor1}{rgb}{0.00000,0.44700,0.74100}%
\begin{tikzpicture}

\begin{axis}[%
width=2.6in,
height=2in,
at={(1.262in,0.697in)},
scale only axis,
bar shift auto,
xmin=16.3,
xmax=17.7,
xtick={16.4,17,17.6},
xticklabels={{$16.4$},{$17$},{$17.6$}},
xlabel style={font=\color{white!15!black}},
xlabel={$\mathcal{C}$},
ymin=0,
ymax=2700,
ytick={0,2500},
yticklabels={{$0$},{$2500$}},
ylabel style={font=\color{white!15!black}},
ylabel={Density},
axis background/.style={fill=white},
legend style={legend cell align=left, align=left, draw=white!15!black}
]
\addplot[ybar, bar width=0.01, fill=mycolor1, draw=black, area legend,forget plot] table[row sep=crcr] {%
16.258585	3\\
16.267755	0\\
16.276925	0\\
16.286095	0\\
16.295265	0\\
16.304435	1\\
16.313605	1\\
16.322775	0\\
16.331945	1\\
16.341115	0\\
16.350285	2\\
16.359455	1\\
16.368625	1\\
16.377795	4\\
16.386965	2\\
16.396135	2\\
16.405305	1\\
16.414475	4\\
16.423645	7\\
16.432815	4\\
16.441985	8\\
16.451155	12\\
16.460325	9\\
16.469495	9\\
16.478665	18\\
16.487835	11\\
16.497005	26\\
16.506175	19\\
16.515345	21\\
16.524515	26\\
16.533685	32\\
16.542855	51\\
16.552025	40\\
16.561195	50\\
16.570365	60\\
16.579535	72\\
16.588705	90\\
16.597875	94\\
16.607045	104\\
16.616215	126\\
16.625385	130\\
16.634555	153\\
16.643725	174\\
16.652895	215\\
16.662065	243\\
16.671235	254\\
16.680405	283\\
16.689575	335\\
16.698745	350\\
16.707915	379\\
16.717085	411\\
16.726255	455\\
16.735425	529\\
16.744595	565\\
16.753765	609\\
16.762935	712\\
16.772105	772\\
16.781275	823\\
16.790445	898\\
16.799615	987\\
16.808785	1072\\
16.817955	1074\\
16.827125	1237\\
16.836295	1295\\
16.845465	1291\\
16.854635	1434\\
16.863805	1535\\
16.872975	1583\\
16.882145	1653\\
16.891315	1846\\
16.900485	1783\\
16.909655	1939\\
16.918825	2058\\
16.927995	2096\\
16.937165	2107\\
16.946335	2213\\
16.955505	2263\\
16.964675	2367\\
16.973845	2240\\
16.983015	2340\\
16.992185	2368\\
17.001355	2447\\
17.010525	2379\\
17.019695	2388\\
17.028865	2369\\
17.038035	2350\\
17.047205	2375\\
17.056375	2260\\
17.065545	2347\\
17.074715	2263\\
17.083885	2155\\
17.093055	2095\\
17.102225	2021\\
17.111395	1962\\
17.120565	1751\\
17.129735	1765\\
17.138905	1664\\
17.148075	1544\\
17.157245	1504\\
17.166415	1439\\
17.175585	1297\\
17.184755	1227\\
17.193925	1072\\
17.203095	1071\\
17.212265	991\\
17.221435	828\\
17.230605	808\\
17.239775	757\\
17.248945	609\\
17.258115	596\\
17.267285	468\\
17.276455	453\\
17.285625	398\\
17.294795	333\\
17.303965	328\\
17.313135	258\\
17.322305	238\\
17.331475	201\\
17.340645	157\\
17.349815	144\\
17.358985	116\\
17.368155	101\\
17.377325	76\\
17.386495	77\\
17.395665	61\\
17.404835	55\\
17.414005	36\\
17.423175	33\\
17.432345	32\\
17.441515	22\\
17.450685	21\\
17.459855	16\\
17.469025	14\\
17.478195	5\\
17.487365	7\\
17.496535	11\\
17.505705	4\\
17.514875	3\\
17.524045	1\\
17.533215	5\\
17.542385	1\\
17.551555	3\\
17.560725	1\\
17.569895	0\\
17.579065	1\\
17.588235	1\\
17.597405	0\\
17.606575	1\\
17.615745	0\\
17.624915	1\\
};
\addplot[forget plot, color=white!15!black] table[row sep=crcr] {%
16.246664	0\\
17.636836	0\\
};

\addplot [color=black,dashed, line width=2.0pt,forget plot]
  table[row sep=crcr]{%
16.5427161668143	26.6462485983053\\
16.5519595954918	31.900742534959\\
16.5612030241692	38.0513716365661\\
16.5704464528466	45.2214671710287\\
16.5796898815241	53.5455998872424\\
16.5889333102015	63.1695431651569\\
16.5981767388789	74.2500057425097\\
16.6074201675563	86.9540980782759\\
16.6166635962338	101.458497309248\\
16.6259070249112	117.948277926871\\
16.6351504535886	136.615378894518\\
16.6443938822661	157.656683058989\\
16.6536373109435	181.271691460231\\
16.6628807396209	207.659783536915\\
16.6721241682983	237.01706422694\\
16.6813675969758	269.53281046905\\
16.6906110256532	305.385542444924\\
16.6998544543306	344.738758804846\\
16.7090978830081	387.736389757863\\
16.7183413116855	434.498036867103\\
16.7275847403629	485.114083193294\\
16.7368281690403	539.640771534183\\
16.7460715977178	598.095361336204\\
16.7553150263952	660.451485803024\\
16.7645584550726	726.634839190525\\
16.7738018837501	796.519329683825\\
16.7830453124275	869.923835072333\\
16.7922887411049	946.609696233829\\
16.8015321697823	1026.27907687091\\
16.8107755984598	1108.57430681523\\
16.8200190271372	1193.07831047738\\
16.8292624558146	1279.31620180253\\
16.8385058844921	1366.75810269943\\
16.8477493131695	1454.82321383539\\
16.8569927418469	1542.8851356052\\
16.8662361705243	1630.27840382107\\
16.8754795992018	1716.30617021398\\
16.8847230278792	1800.24892327505\\
16.8939664565566	1881.37411145992\\
16.9032098852341	1958.94649953014\\
16.9124533139115	2032.23906098384\\
16.9216967425889	2100.54418625153\\
16.9309401712663	2163.18496859281\\
16.9401835999438	2219.52631826154\\
16.9494270286212	2268.98565114218\\
16.9586704572986	2311.04290108994\\
16.9679138859761	2345.24961576813\\
16.9771573146535	2371.23691372251\\
16.9864007433309	2388.72210534772\\
16.9956441720083	2397.51381159857\\
17.0048876006858	2397.51545084956\\
17.0141310293632	2388.72700507059\\
17.0233744580406	2371.24502015708\\
17.0326178867181	2345.26084040607\\
17.0418613153955	2311.0571222833\\
17.0511047440729	2269.00271629792\\
17.0603481727503	2219.54604656638\\
17.0695916014278	2163.20715420059\\
17.0788350301052	2100.56860185457\\
17.0880784587826	2032.26546168157\\
17.0973218874601	1958.97462690338\\
17.1065653161375	1881.40369775625\\
17.1158087448149	1800.27969561124\\
17.1250521734923	1716.33785470427\\
17.1342956021698	1630.31072954557\\
17.1435390308472	1542.91783834054\\
17.1527824595246	1454.85603947899\\
17.1620258882021	1366.7908103302\\
17.1712693168795	1279.34856633262\\
17.1805127455569	1193.11012486266\\
17.1897561742343	1108.60538380474\\
17.1989996029118	1026.30925028848\\
17.2082430315892	946.63882179702\\
17.2174864602666	869.951790764821\\
17.2267298889441	796.546015707253\\
17.2359733176215	726.660177530156\\
17.2452167462989	660.47541944821\\
17.2544601749763	598.117853201706\\
17.2637036036538	539.661803131665\\
17.2729470323312	485.133653103924\\
17.2821904610086	434.516159066249\\
17.2914338896861	387.753091841481\\
17.3006773183635	344.754080164566\\
17.3099207470409	305.39953243516\\
17.3191641757183	269.545526605028\\
17.3284076043958	237.028570447454\\
17.3376510330732	207.670148560239\\
17.3468944617506	181.280987256013\\
17.3561378904281	157.66498345343\\
17.3653813191055	136.622758320819\\
17.3746247477829	117.954810327888\\
17.3838681764603	101.464255194816\\
17.3931116051378	86.9591517355791\\
17.4023550338152	74.254422594533\\
17.4115984624926	63.1733872693006\\
17.4208418911701	53.5489315641595\\
17.4300853198475	45.2243427519477\\
17.4393287485249	38.0538433164026\\
17.4485721772023	31.9028583179244\\
17.4578156058798	26.6480523226081\\
};

\end{axis}
\end{tikzpicture}%
	\caption{Histogram of the mutual information and the fitted Gaussian distribution for the covariance matrix $\mR$ given in \eqref{eq:RR}.}
	\label{fig:hist}
\end{figure}

 \begin{figure}
	\centering
	\input{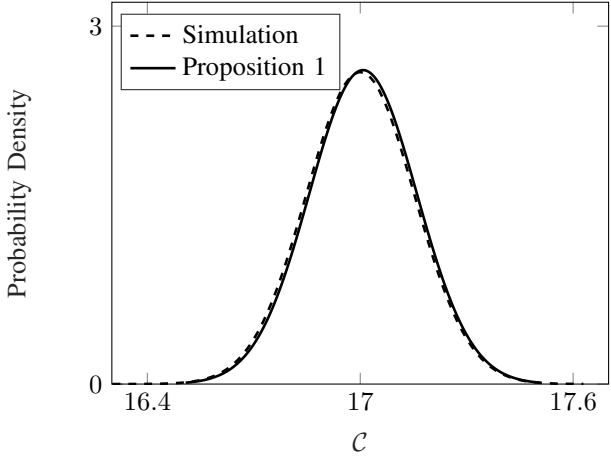}
	\caption{Probability density of the maximum mutual information and the Gaussian distribution fitted to the numerical simulations.}
	\label{fig:dist}
\end{figure}

We now conduct a new experiment. We consider the same setting and let the \ac{irs} size to vary as $N = N_\xx^2$ while changing from $N_\xx = 8$ to $N_\xx = 36$, gradually. Note that in this setting, the area of the \ac{irs} grows linearly\footnote{Such scaling is not realistic for very large \acp{irs}. We discuss more realistic scaling scenarios in the forthcoming parts of this section.} in $N$, i.e., $t = 0$, and hence the limit in \eqref{eq:const_IRS} is satisfied. For each choice of $N$, we collect $500$ samples, determine numerically the mean and variance of $\mathcal{C}$ and plot it against $N$ in Figs.~\ref{fig:mean} and \ref{fig:var}. 

The numerical results in Figs.~\ref{fig:mean} and \ref{fig:var} are compared with the closed-form expressions given in Proposition~\ref{th:0}. As the figures show, the analytical expressions closely track the numerical results. The figures further demonstrate the hardening of the channel in terms of the \ac{irs} size. As $N$ grows large, the mean mutual information grow large, while the variance drops.

\subsection{Channel Hardening Order}
\label{sec:IRS_Size}
The numerical experiments in the previous section consider the case of $q=0$, i.e., when the \ac{irs} area grows linearly in $N$. Although this can be considered feasible for a small or moderate number of reflecting elements, it is not a realistic scaling for asymptotically large \acp{irs}. In fact, in such scenarios, the total area of the \ac{irs} is limited, and hence, the area grows sub-linearly, i.e., $q > 0$. We address this point by investigating the speed of channel hardening in this section. %
We are mainly interested to find out how fast the channel hardens with respect to $N$ for a given scaling of the \ac{irs} area. In general, the speed depends on the correlation among the reflecting elements, and therefore the physical dimensions of the \ac{irs}. This argument is analytically characterized in Proposition~\ref{th:0-1}.

 \begin{figure}
	\centering
%
%
\begin{tikzpicture}

\begin{axis}[%
	width=2.6in,
	height=2in,
	at={(1.262in,0.697in)},
	scale only axis,
	xmin=30,
	xmax=1390,
	xtick={100,500,900,1300},
	xticklabels={{$100$},{$500$},{$900$},{$1300$}},
	xlabel style={font=\color{white!15!black}},
	xlabel={$N$},
	ymin=12,
	ymax=22,
	ytick={15,20},
	yticklabels={{$15$},{$20$}},
	ylabel style={font=\color{white!15!black}},
	ylabel={$\mu_{\mathcal{C}}$},
	axis background/.style={fill=white},
	legend style={at={(.97,.23)},legend cell align=left, align=left, draw=white!15!black}
]
\addplot [color=black, draw=none, mark size=3.0pt, mark=o, mark options={solid, black}]
  table[row sep=crcr]{%
64	12.9970684902932\\
100	14.2883841659514\\
144	15.3292297021062\\
196	16.2336621379149\\
256	17.0070276728797\\
324	17.6865847711985\\
400	18.2865861473418\\
484	18.8363078895447\\
576	19.3397787785154\\
676	19.7998357227501\\
784	20.2263418543756\\
900	20.6297261075209\\
1024	21.0012587825713\\
1156	21.3494913165317\\
1296	21.6751805396198\\
};
\addlegendentry{Simulation}

\addplot [color=black, dashed, line width=1.0pt]
  table[row sep=crcr]{%
64	13.0355483121664\\
100	14.3100808215705\\
144	15.3551023458256\\
196	16.2404574858361\\
256	17.0083288692013\\
324	17.6861783237922\\
400	18.2928832932337\\
484	18.8419657322054\\
576	19.3434189686008\\
676	19.8048269394158\\
784	20.2320871058475\\
900	20.6298876957136\\
1024	21.0020261819034\\
1156	21.3516220639022\\
1296	21.681262280042\\
};
\addlegendentry{Proposition 1}

\end{axis}
\end{tikzpicture}%
	\caption{Numerical simulations and the analytic expression for the mean mutual information against $N$.}
	\label{fig:mean}
\end{figure}
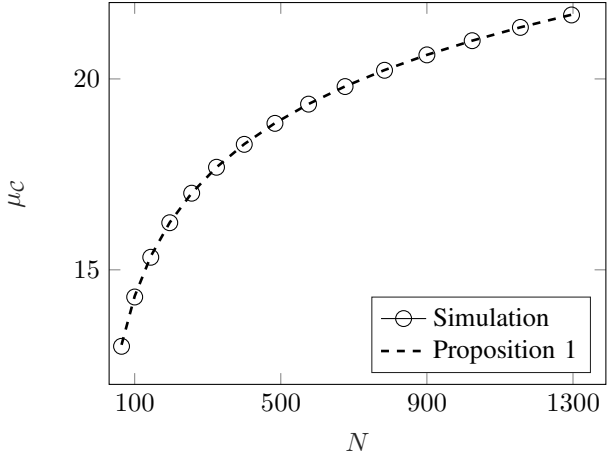

\begin{figure}
	\centering
%
%
\begin{tikzpicture}

\begin{axis}[%
	width=2.6in,
height=2in,
at={(1.262in,0.7in)},
scale only axis,
xmin=30,
xmax=1390,
xtick={100,500,900,1300},
xticklabels={{$100$},{$500$},{$900$},{$1300$}},
xlabel style={font=\color{white!15!black}},
xlabel={$N$},
ymode=log,
ymin=0.0035,
ymax=0.125,
ytick={.1,.01},
yticklabels={{$10^{-1}$},{$10^{-2}$}},
ylabel style={font=\color{white!15!black}},
ylabel={$\sigma^2_{\mathcal{C}}$},
yminorticks=true,
axis background/.style={fill=white},
legend style={at={(.97,.97)},legend cell align=left, align=left, draw=white!15!black}
]
\addplot [color=black, draw=none, mark size=3.0pt, mark=o, mark options={solid, black}]
  table[row sep=crcr]{%
64	0.110756033617599\\
100	0.0665236171208807\\
144	0.0432304836381895\\
196	0.0319728420295954\\
256	0.0234246428990208\\
324	0.0196940608164752\\
400	0.0137706619243541\\
484	0.0131669896718542\\
576	0.0102679255206928\\
676	0.00878947232805994\\
784	0.00822424044817825\\
900	0.00660170003380224\\
1024	0.00573928232825865\\
1156	0.0053249983405303\\
1296	0.00402217804319547\\
};
\addlegendentry{Simulation}

\addplot [color=black, dashed, line width=1.0pt]
  table[row sep=crcr]{%
64	0.0981198494945909\\
100	0.0627363215526332\\
144	0.0431512624661583\\
196	0.0313914768963716\\
256	0.0237677980761486\\
324	0.018529968430328\\
400	0.0148149757426382\\
484	0.012160643034831\\
576	0.0102474168118813\\
676	0.00881560485437426\\
784	0.00766881709886843\\
900	0.00668837771857914\\
1024	0.00583019755511801\\
1156	0.00509667185676224\\
1296	0.00449794341562863\\
};
\addlegendentry{Proposition~1}

\end{axis}
\end{tikzpicture}%
	\caption{Numerical simulations and the analytic expression for the variance of the mutual information against $N$.}
	\label{fig:var}
\end{figure}
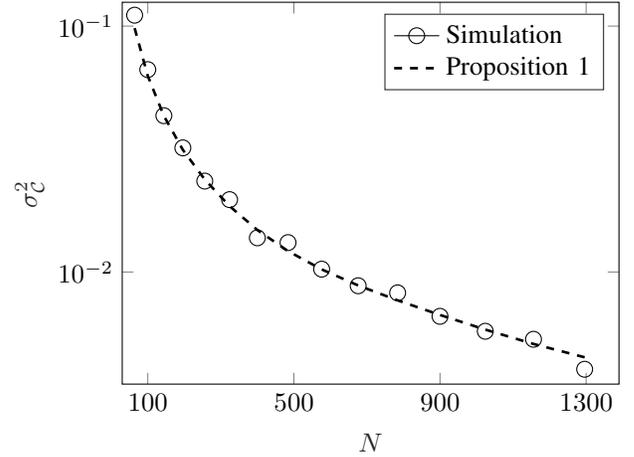

\begin{proposition}
	\label{th:0-1}
	Let the sub-linearly growing maximum eigenvalue of $\mR$ be uniformly bounded from above as
	\begin{align} 
	\lambda_{\max} \leq a N^u,
	\end{align}
	for some non-negative real $a$ and $0 \leq u < 1$. Then, there exists an integer $N_0$, such that for $N \geq N_0$ the mean and variance of the maximum mutual information is bounded uniformly as 
	\begin{subequations}
		\begin{align}
			\mu_{\mathcal{C}} &\geq b + \brc{1-q}  \log_2 N\\
			\sigma_{\mathcal{C}}^2 &\leq {\frac{c}{N^{1-u}}},
		\end{align}
	\end{subequations}
for some non-negative real scalars $b$ and $c$.
\end{proposition}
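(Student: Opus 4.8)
The plan is to deduce Proposition~\ref{th:0-1} directly from the closed-form expressions for $\mu_{\mathcal C}$ and $\sigma_{\mathcal C}^2$ in Proposition~\ref{th:0}, after substituting the scaling law $A_N=A_0 N^{-q}$; this gives $\bar\alpha_N=\frac{\alpha_{\mathrm r}\alpha_{\mathrm s}A_0^2}{1+\kappa_{\mathrm r}}N^{-2q}$, so the coherent reflection term behaves as $\kappa_{\mathrm r}\bar\alpha_N N^2=\kappa_{\mathrm r}\frac{\alpha_{\mathrm r}\alpha_{\mathrm s}A_0^2}{1+\kappa_{\mathrm r}}N^{2(1-q)}$, and the whole argument reduces to tracking powers of $N$. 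For the mean, I would lower-bound $\mu$ by discarding the two nonnegative summands $\alpha_{\mathrm d}A_M\ge0$ and $\bar\alpha_N\,\bar{\mathbf{h}}_{\mathrm r}^{\her}\mR\,\bar{\mathbf{h}}_{\mathrm r}\ge0$ (the latter because $\mR\succeq0$), keeping $\mu\ge\kappa_{\mathrm r}\bar\alpha_N N^2$; then $\mu_{\mathcal C}=\log_2(1+\rho M\mu)\ge\log_2(\rho M\mu)\ge\log_2\!\big(\rho M\kappa_{\mathrm r}\tfrac{\alpha_{\mathrm r}\alpha_{\mathrm s}A_0^2}{1+\kappa_{\mathrm r}}\big)+2(1-q)\log_2 N$. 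As $2(1-q)\ge1-q$ and $\log_2 N\ge0$ for $N\ge1$, this already has the claimed shape, and if the constant prefactor is negative one simply raises $N_0$ so that the surplus $(1-q)\log_2N$ absorbs it and $b$ can be taken nonnegative; note this step invokes only the scaling of $A_N$ and $\kappa_{\mathrm r}>0$, not the eigenvalue hypothesis.

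For the variance I would bound $\sigma_{\mathcal C}^2=\frac{(\rho M\log_2\mathrm e)^2}{(1+\rho M\mu)^2}\big(\omega\eta+\eta+\tfrac{M-1}{M}\alpha_{\mathrm d}A_M\big)$ piece by piece. The denominator satisfies $(1+\rho M\mu)^2\ge(\rho M\kappa_{\mathrm r}\bar\alpha_N N^2)^2=O(N^{4(1-q)})$ exactly as above. For the numerator, the one ingredient genuinely beyond Proposition~\ref{th:0} is the hypothesis $\lambda_{\max}\le aN^u$: since $\bar{\mathbf{h}}_{\mathrm r}=\mathbf{a}_N(\varphi_{\mathrm{t}1},\theta_{\mathrm{t}1})$ has unit-modulus entries, $\|\bar{\mathbf{h}}_{\mathrm r}\|^2=N$, so the Rayleigh-quotient bound gives $\bar{\mathbf{h}}_{\mathrm r}^{\her}\mR\,\bar{\mathbf{h}}_{\mathrm r}\le\lambda_{\max}\|\bar{\mathbf{h}}_{\mathrm r}\|^2\le aN^{1+u}$, hence $\bar\alpha_N\,\bar{\mathbf{h}}_{\mathrm r}^{\her}\mR\,\bar{\mathbf{h}}_{\mathrm r}=O(N^{1+u-2q})$. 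Using $1+u-2q\le2(1-q)$ (which is $u<1$) this yields $\omega=O(N^{2(1-q)})$ and $\eta=O(N^{1+u-2q})$, whence $\omega\eta+\eta+\tfrac{M-1}{M}\alpha_{\mathrm d}A_M=O(N^{3+u-4q})$. Dividing by the denominator bound gives $\sigma_{\mathcal C}^2=O(N^{(3+u-4q)-4(1-q)})=O(N^{-(1-u)})$, i.e.\ $\sigma_{\mathcal C}^2\le c\,N^{-(1-u)}$ for all $N$ past some threshold, which is merged with the one from the mean bound into a single $N_0$.

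The computation is essentially bookkeeping, so I do not expect a substantive obstacle; the care needed is only in the exponent comparisons that isolate the dominant power of $N$. These rely on $u<1$ and $q<1$ (so the coherent term $N^{2(1-q)}$ dominates both $\mu$ and $\omega$, and $3+u-4q>1+u-2q>0$), and on the coupling $u\ge q$ — which holds through \eqref{eq:const_IRS} together with the trivial bound $\lambda_{\max}\ge\tfrac1N\tr{\mR}=1$ — so that $N^{1+u-2q}\ge1$ and the big-$O$ estimates collapse cleanly for $N\ge1$. Finally, I note the implicit reliance on $\kappa_{\mathrm r}>0$, i.e.\ the assumed mixed LoS/NLoS reflection link: without it $\mu$ would fail to grow at rate $N^{2(1-q)}$ and both the claimed rate and the statement itself would need revision.
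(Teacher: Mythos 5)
Your proposal is correct and follows essentially the same route as the paper's proof: substitute $\bar{\alpha}_N = \mathcal{O}(N^{-2q})$, lower-bound $\mu$ by the coherent term $\kappa_{\mathrm{r}}\bar{\alpha}_N N^2$ for the mean, and use the Rayleigh-quotient bound $\mathbf{\bar{h}}_{\mathrm{r}}^\her \mR \mathbf{\bar{h}}_{\mathrm{r}} \leq \lambda_{\max} N \leq a N^{1+u}$ together with power counting for the variance; the only cosmetic difference is that you bound the numerator $\omega\eta+\eta+\tfrac{M-1}{M}\alpha_{\mathrm{d}}A_M$ directly, whereas the paper factors the bound through a constant $\Xi_N$ and the ratio $\eta/\mu$. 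You are in fact slightly more careful than the paper on two points it glosses over: ensuring $b\geq 0$ by enlarging $N_0$, and making explicit that the coupling $u\geq q$ from \eqref{eq:const_IRS} is what prevents the constant terms in $\eta$ and the numerator from spoiling the exponent comparison.
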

\begin{proof}
	We start the proof by noting that 
	\begin{subequations}
		\begin{align}
			\bar{\alpha}_N &=  \frac{\alpha_{\mathrm{ r}} \alpha_{\mathrm{ s}} }{1+\kappa_\mathrm{r}} A_N^2\\
			&= \frac{\alpha_{\mathrm{ r}} \alpha_{\mathrm{ s}} }{1+\kappa_\mathrm{r}} \frac{A_{\rm IRS}^2}{N^2}
			= \frac{\alpha_{\mathrm{ r}} \alpha_{\mathrm{ s}} }{1+\kappa_\mathrm{r}} \frac{A_0^2}{N^{2q}}. \label{eq:aplpha_N}
		\end{align}
	\end{subequations}
	From Proposition~\ref{th:0}, since $\mu \geq \bar{\alpha}_N\kappa_{\mathrm{ r} }N^2$, we have 
\begin{align}
	\mu_{\mathcal{C}} 
	&\geq   \log_2 \left( \rho M \kappa_{\mathrm{ r} } \bar{\alpha}_N N^2 \right).
\end{align}
Using  \eqref{eq:aplpha_N}, we can hence conclude that
\begin{align}
	\mu_{\mathcal{C}} 
	&= \log_2 \left( \frac{\alpha_{\mathrm{ r}} \alpha_{\mathrm{ s}} }{1+\kappa_\mathrm{r}} \rho M  \kappa_{\mathrm{ r} } A_0^2  \right) + \log_2 N^{2-2q}.
\end{align}
Noting that the first term on the right hand side does not grow in $N$, we conclude that 
\begin{align}
	\mu_{\mathcal{C}} &\geq  b + \brc{1-q}\log_2 N,
\end{align}
by setting $b$ to 
\begin{align}
	b = \log_2 \left( \frac{\alpha_{\mathrm{ r}} \alpha_{\mathrm{ s}} }{1+\kappa_\mathrm{r}} \rho M  \kappa_{\mathrm{ r} } A_0^2  \right) .
\end{align}

To further bound the variance, we note that $\norm{\mathbf{\bar{h} }_{\mathrm{ r}}}^2 = N$. As a result, $\mathbf{\bar{h}}_{\mathrm{r}}^\her \mR \mathbf{\bar{h}}_{\mathrm{r}}/N$ determines the \textit{Rayleigh quotient} of $\mR$ at $\mathbf{\bar{h} }_{\mathrm{ r}}$ which is bounded from above by the maximum eigenvalue of $\mR$ \cite{horn2013matrix}. We hence have
\begin{align}
\bar{\alpha}_N	\mathbf{\bar{h}}_{\mathrm{r}}^\her \mR \mathbf{\bar{h}}_{\mathrm{r}}&\leq   \bar{\alpha}_N\lambda_{\max} N.
\end{align}
Let $\lambda_{\max}$ be bounded uniformly from above by $aN^u$ for some $0\leq u <1$. This concludes that $\mu$, $\eta$ and $\omega$ in Proposition~\ref{th:0} are uniformly bounded from above respectively by 
\begin{subequations}
	\begin{align}
		\mu^{\rm U } \brc{N} &= \mathcal{O}\brc{ N^{2-2q} }\\
		\eta^{\rm U } \brc{N} &= \mathcal{O}\brc{ N^{1+u-2q} }\\
		\omega^{\rm U } \brc{N} &= \mathcal{O}\brc{ N^{2-2q} }.
	\end{align}
\end{subequations}

We now start with bounding $\sigma_{\mathcal{C}}$. Since $1+ \rho M \mu >  \rho M \mu$, we can initially bound the variance as
	\begin{align}
		\sigma_{\mathcal{C}} &< \frac{ \displaystyle  \log_2 \mathrm{e} }{\mu }  \sqrt{ \omega \eta + \eta + \frac{M-1}{M} \alpha_{\mathrm{ d}} A_M } =\Xi_N \sqrt{\frac{ \eta  }{\mu }} 
	\end{align}
where $\Xi_N$ is given by
	\begin{align}
	\Xi_N=   \sqrt{ \frac{1}{\mu \eta}\brc{ \omega \eta + \eta + \frac{M-1}{M} \alpha_{\mathrm{ d}} A_M } } \log_2 \mathrm{e}
\end{align}
The term $\Xi_N$ is uniformly bounded from above by a constant. We can hence conclude that there exists an integer $N_1$ and a non-negative real $c_0$, such that for $N\geq N_1$, we have
\begin{align}
	\sigma_{\mathcal{C}} 
	&\leq c_0  \sqrt{ \frac{ \eta }{ \mu  } } 
\end{align}
Noting that $\mu \geq \kappa_{\mathrm{ r} } \bar{\alpha}_N N^2$, we can further bound the above upper-bound as
\begin{align}
	\sigma_{\mathcal{C}} 
	&\leq c_1 \frac{\sqrt{\eta}}{N^{1-q}} \label{upper:1}
\end{align}
where $c_1$ being defined as 
\begin{align}
	c_1 = \sqrt{\frac{c_0 \brc{1+\kappa_{\mathrm{ r} }}}{\alpha_{\mathrm{ r}} \alpha_{\mathrm{ s}} \kappa_{\mathrm{ r} }}}.
\end{align}


Since $\eta$ is uniformly bounded with $\eta^{\rm U } \brc{N}$, we conclude that there exists an integer $N_2 \geq N_1$ and a non-negative real $c_2$, such that for $N \geq N_2$, we have 
		\begin{align}
\frac{ \eta }{  N^{2-2q}  } \leq \frac{c_2 }{N^{1-u}}.
\end{align}
The proof is finally completed by setting $c = c_1^2 c_2$ and $N_0$ to be $N_0 = N_2$.
\end{proof}

Proposition~\ref{th:0-1} formulates an intuitive behavior. The higher the correlation among the \ac{irs} elements is, i.e., larger $u$, the slower the end-to-end channel hardens. It further indicates~that the fastest hardening is achieved by $u=q=0$.~This~corresponds to reflecting arrays whose number of \textit{strongly} correlated elements does not scale with $N$ while the area of the \ac{irs} grows linearly in $N$. From implementation viewpoint, this means that the neighboring reflecting elements on the \ac{irs} are well-distanced, and the physical dimensions of the \ac{irs} grow with the number of reflecting elements, such that the distance between two neighboring elements remains constant.

We now validate Proposition~\ref{th:0-1} through a numerical experiment. We consider the covariance matrix derived in \cite[Proposition~1]{bjornson2020rayleigh} for a planar \ac{irs} whose elements are spaced with $\lambda/2$ in both directions, i.e., $d_\xx = d_\yy = \lambda/2$. This means that for this array $q=0$. The covariance matrix in this case is specified via \eqref{eq:RR}. We consider a square array, i.e., $N_\xx = N_\yy$ and let $N$ grow gradually from $N= 64$ to $N=1296$. For this sequence of covariance matrices, $\lambda_{\max}$ is plotted against $N$ in Fig.~\ref{fig:lam_max}. We further use the curve fitting toolbox of MATLAB, i.e., \texttt{cftool} \cite{matlabcurvefitting}, to fit the collected data to a curve of form $\lambda_{\max} = a N^u$. The result is shown with a dashed line in the figure for which $a=0.83$ and $u=0.25$. 

 \begin{figure}
	\centering
	\input{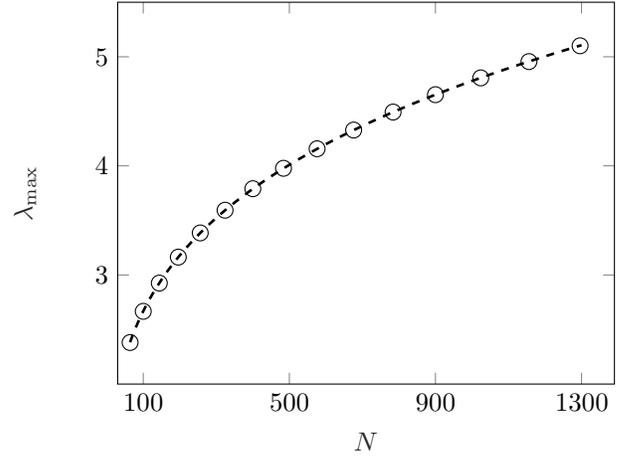}
	\caption{$\lambda_{\max}$ against $N$ for the classic \ac{irs} covariance matrix derived in \cite[Proposition~1]{bjornson2020rayleigh}. The dashed line shows the fitted curve which reads $\lambda_{\max} \approx 0.83\sqrt[4]{N}$, i.e., $u = 0.25$.}
	\label{fig:lam_max}
\end{figure}

Given the results in Fig.~\ref{fig:lam_max}, Proposition~\ref{th:0-1} suggests that~the variance of the mutual information in this case is uniformly bounded from above by $cN^{-0.75}$ for some $c$. This is shown in Fig.~\ref{fig:upp}, where we compare $\sigma_{\mathcal{C}}^2$ against the upper bound with $c=1.9$. The numerical results show consistency~with~Proposi-tion~\ref{th:0-1}. Interestingly, the suggested upper bound gives a \textit{pessimistic} approximation of the hardening speed. In fact the true variance, drops much faster than the order of the upper bound.

\begin{figure}
	\centering
	\input{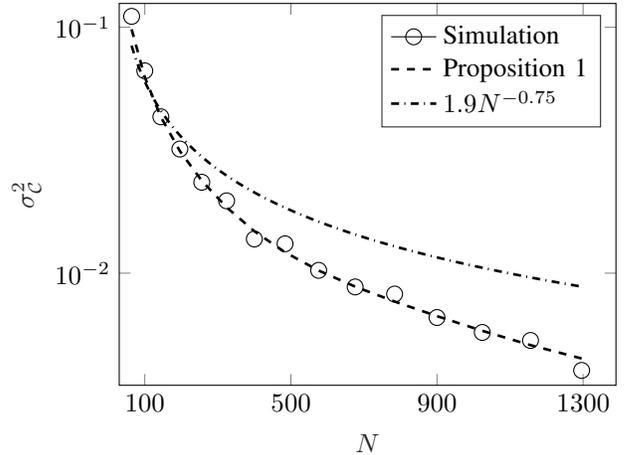}
	\caption{Comparing the channel hardening speed to the uniform upper bound suggested by Proposition~\ref{th:0-1}.}
	\label{fig:upp}
\end{figure}

\subsection{Scaling of IRS Area and Covariance Matrix}
As indicated earlier, $u$ and $q$ are mutually coupled. This is easily seen by sending $q\rightarrow 1$. In this case, the total area of the \ac{irs} is constant. This means that in the limit $N\rightarrow \infty$, the covariance matrix converges to a rank-one matrix and hence
\begin{align}
	\lambda_{\max} = \tr{\mR} = N,
\end{align}
or equivalently $u\rightarrow 1$; see also \cite{bjornson2020rayleigh}. To understand this mutual coupling analytically and find out how realistic the condition \eqref{eq:const_IRS} is, i.e., $u\geq q$, we conduct a numerical experiment. For a given $q$, we repeat the previous experiment. This means that we consider square \acp{irs} of size $N$, i.e., $N_\xx = N_\yy = \sqrt{N}$, and let $N$ grow gradually from $N= 64$ to $N=1296$. The distance between two neighboring elements on the \ac{irs} in this case is
\begin{align}
	d_\xx = d_\yy = \frac{\lambda}{2N^{q/2}}.
\end{align}
Invoking the least-squares method, we numerically find $u$ such that $\lambda_{\max} \leq aN^u$ for some real $a$. We then change $q$ from $q=0$ to $q=1$ and plot $u$ against it. The result is shown in Fig.~\ref{fig:u_vs_q} along with the region $u \geq q$ which corresponds to the constraint given by \eqref{eq:const_IRS}.

\begin{figure}
	\centering
	\begin{tikzpicture}

\begin{axis}[%
	width=2.6in,
	height=2in,
	at={(1.262in,0.7in)},
	scale only axis,
	xmin=-.05,
	xmax=1.05,
	xtick={0,1},
	xticklabels={{$0$},{$1$}},
	xlabel style={font=\color{white!15!black}},
	xlabel={$q$},
	ymin=-.05,
	ymax=1.05,
	ytick={0,1},
	yticklabels={{$0$},{$1$}},
	ylabel style={font=\color{white!15!black}},
	ylabel={$u$},
	yminorticks=true,
	axis background/.style={fill=white},
	legend style={at={(.97,.97)},legend cell align=left, align=left, draw=white!15!black}
	]
\addplot [color=black, dashed, line width=1.0pt,forget plot]
  table[row sep=crcr]{%
0	0.252938221766722\\
0.025	0.266828009551138\\
0.05	0.288776230761314\\
0.075	0.308271534046978\\
0.1	0.326811208525004\\
0.125	0.34640395603301\\
0.15	0.365155546972958\\
0.175	0.383706141477129\\
0.2	0.402838996931415\\
0.225	0.421395915991257\\
0.25	0.439953971966012\\
0.275	0.458531373971296\\
0.3	0.477110045586087\\
0.325	0.495707289196655\\
0.35	0.5143221349567\\
0.375	0.532948573285285\\
0.4	0.551599037312229\\
0.425	0.570317519873392\\
0.45	0.589098666647782\\
0.475	0.608019092234421\\
0.5	0.626961442153123\\
0.525	0.645645171113727\\
0.55	0.664956058728108\\
0.575	0.683849607533629\\
0.6	0.701991120557559\\
0.625	0.721648453367288\\
0.65	0.741653407004811\\
0.675	0.760746352409074\\
0.7	0.77967488718215\\
0.725	0.799742050350287\\
0.75	0.820302135872291\\
0.775	0.838524174089761\\
0.8	0.858452706415122\\
0.825	0.882347869688267\\
0.85	0.905370840950944\\
0.875	0.927049207458561\\
0.9	0.947068413024721\\
0.925	0.965248020705397\\
0.95	0.981523231926511\\
0.975	0.995921565659442\\
1	0.99999998922787\\
};

\addplot [color=black, dotted , line width=1.0pt,forget plot]
  table[row sep=crcr]{%
 -.5 -.5\\
0	0\\
0.025	0.025\\
0.05	0.05\\
0.075	0.075\\
0.1	0.1\\
0.125	0.125\\
0.15	0.15\\
0.175	0.175\\
0.2	0.2\\
0.225	0.225\\
0.25	0.25\\
0.275	0.275\\
0.3	0.3\\
0.325	0.325\\
0.35	0.35\\
0.375	0.375\\
0.4	0.4\\
0.425	0.425\\
0.45	0.45\\
0.475	0.475\\
0.5	0.5\\
0.525	0.525\\
0.55	0.55\\
0.575	0.575\\
0.6	0.6\\
0.625	0.625\\
0.65	0.65\\
0.675	0.675\\
0.7	0.7\\
0.725	0.725\\
0.75	0.75\\
0.775	0.775\\
0.8	0.8\\
0.825	0.825\\
0.85	0.85\\
0.875	0.875\\
0.9	0.9\\
0.925	0.925\\
0.95	0.95\\
0.975	0.975\\
1	1\\
1.05 1.05\\
};

\addplot [draw=none,fill=gray, fill opacity=0.25]
coordinates {
	(1.5, 1.5) 
	(-.5, 1.5)
	(-.5, -.5)  };

\end{axis}
\end{tikzpicture}%
	\caption{Eigenvalue order exponent $u$ against the area order exponent $q$. The gray area is the feasible region $u\geq q$ where \eqref{eq:const_IRS} is satisfied.}
	\label{fig:u_vs_q}
\end{figure}
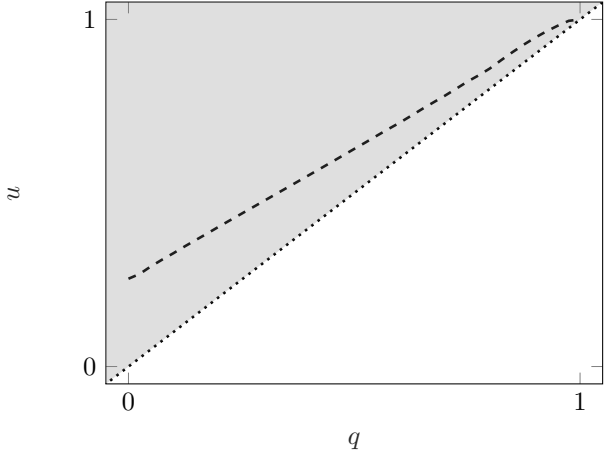

As the figure suggests, for the covariance matrix, derived for the Rayleigh fading model, $u\geq q$. This is intuitive, since in both extreme cases of $q =0$ and $q\rightarrow1$, this constraint is satisfied. The result further show a linear growth of $u$ in $q$. This observation further suggests that Propositions~\ref{th:0} and \ref{th:0-1} are further valid approximations in the limits $q\rightarrow 1$ and $u\rightarrow 1$. We confirm this intuition through numerical analysis of this limiting case in the sequel.

\subsection{The Extreme Case of Rank-One Covariance}
\label{sec:Linear}
As mentioned, the sub-linearity constraint in Propositions~\ref{th:0} and \ref{th:0-1} is not a strong constraint. In fact, noting that $\lambda_{\max} \leq N$, one can conclude that the sub-linearity constraint is satisfied in all settings, but the \textit{extreme} case with a \textit{finite-rank} covariance matrix whose rank does not grow with $N$, and therefore $\lambda_{\max}$ scales linearly in $N$. From the mutual coupling between $u$ and $q$, in this case we further have $q=1$ which corresponds to an \ac{irs} whose total area is kept fixed. 

Inspired by our observation in Section~\ref{sec:IRS_Size}, we conjecture that Propositions~\ref{th:0} and \ref{th:0-1} give accurate characterizations, even in an extreme case with a linearly growing $\lambda_{\max}$. To confirm this conjecture, we run simulations for the same setting considered in Section~\ref{sec:Numerical} while replacing the covariance matrix with $\mR = \boldsymbol{1}_N$, i.e., the matrix of all ones, and correspondingly setting $q=1$, i.e., assuming
\begin{align}
	A_N = \frac{A_0}{N}.
\end{align}
for some real $A_0$. To achieve this scaling, we set the distance between two neighboring \ac{irs} elements to be
\begin{align}
	d_\xx = d_\yy = \frac{\lambda}{2\sqrt{N}}
\end{align}
which means that $A_0 = \lambda^2/4$. We further set $\alpha_{\mathrm{ r}} = \alpha_{\mathrm{ s}} = 1/A_0$ to make the comparison to the first experiment in Section~\ref{sec:Numerical} fair. The remaining parameters are the same as those considered in Section~\ref{sec:Numerical}.

The matrix $\mR = \boldsymbol{1}_N$ is a rank one matrix whose maximum eigenvalue reads $\lambda_{\max} = N$, and thus scales linearly with $N$. One can observe this case, as an extreme case of the covariance matrix in \eqref{eq:RR}, in which the distance between the most outer two elements of the \ac{irs} is still significantly smaller than the wavelength\footnote{In other words, an \ac{irs} whose physical dimensions are significantly smaller than the wavelength.}.

Fig.~\ref{fig:hist2} shows the numerically-evaluated histogram of $\mathcal{C}$, as well as the properly scaled version of the fitted Gaussian probability density and the distribution suggested by Proposition~\ref{th:0}. The histogram is evaluated for $10^5$ channel realizations. As the figure shows, the histogram in this case is rather loosely approximated by the distribution given in Proposition~\ref{th:0}, and compared to the almost-perfect match in Figs.~\ref{fig:hist} and \ref{fig:dist}, it shows some approximation error. This approximation error follows from the lower speed of channel hardening in this case. In fact, from Proposition~\ref{th:0-1}, we know that by sending $q,u\rightarrow 1$, the mean and variance of $\mathcal{C}$ scale with $\mathcal{O}\brc{1}$ in $N$, and hence, the approximation in Proposition~\ref{th:0} becomes inaccurate\footnote{This is the main reason that Proposition~\ref{th:0} excludes the case $q=u=1$.}.

\begin{figure}
	\centering
	\input{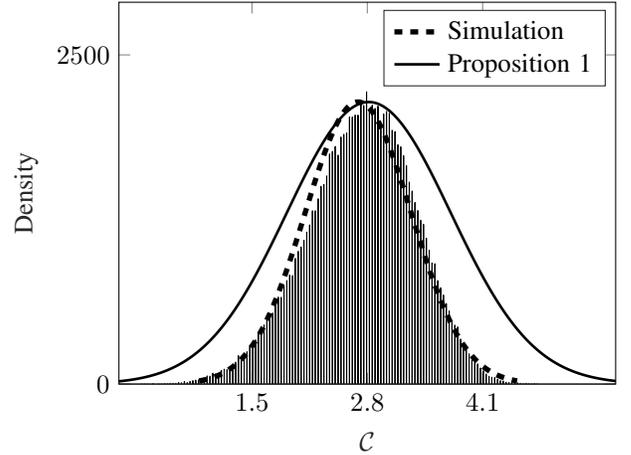}
	\caption{Histogram and the fitted Gaussian density of $\mathcal{C}$ for the extreme case of $\mR = \boldsymbol{1}_N$.  The scaled version of the analytical density function given by Proposition~\ref{th:0} is further plotted with a solid line.}
	\label{fig:hist2}
\end{figure}

Although the approximation by Proposition~\ref{th:0} is inaccurate, the histogram in Fig.~\ref{fig:hist2} shows a rather small standard deviation for $\mathcal{C}$. This is an interesting observation confirming our earlier finding in Section~\ref{sec:IRS_Size} which indicates that the upper bound given by Proposition~\ref{th:0-1} is rather \textit{pessimistic}.This finding can be analytically supported as follows: The sub-linearity constraint in Propositions~\ref{th:0} and \ref{th:0-1} comes from utilizing the inequality  $	\mathbf{\bar{h}}_{\mathrm{r}}^\her \mR \mathbf{\bar{h}}_{\mathrm{r}} \leq   \lambda_{\max} N$ to bound the variance $\sigma_{\mathcal{C}}^2$. Nevertheless, for rank-deficient covariance matrices, this bound is often asymptotically very loose. This means that despite the inaccuracy of Proposition~\ref{th:0} for exactly linear scaling, i.e., $q=u=1$, the channel hardening still occurs in this case. 

We now invoke the above heuristic conclusion to further modify Proposition~\ref{th:0} for linearly scaling $A_{\rm IRS}$ and $\lambda_{\max}$. In fact, assuming that $\mathcal{C}$ is still Gaussian in this case, we can conclude that the approximation error in this case comes from the inaccuracy of the variance in  Proposition~\ref{th:0} for cases with linear scaling. The following pseudo-proposition gives a modified approximation for the distribution of $\mathcal{C}$ which is also accurate for linearly scaling scenarios. Interestingly, by assuming sub-linear scaling, the approximation reduces to the limiting distribution given by Proposition~\ref{th:0}.

\begin{pproposition}
	\label{th:0-2}
	For linearly scaling $A_{\rm IRS}$ and $\lambda_{\max}$, i.e., cases with $q=u=1$, the maximum mutual information $\mathcal{C}$ is well approximated by a real Gaussian random variable whose mean is $\mu_{\mathcal{C}}$ and whose variance is
		\begin{align}
			\hat{\sigma}^2_{\mathcal{C}} &=  \frac{\kappa_{\mathrm{ r} }}{\kappa_{\mathrm{ r} } + \vartheta }  \sigma^2_{\mathcal{C}} ,
		\end{align}
where $\mu_{\mathcal{C}}$ and $\sigma^2_{\mathcal{C}}$ are given in Proposition~\ref{th:0}, and 
\begin{align}
	\vartheta = \frac{\alpha_{\mathrm{ d}} A_M}{2 \bar{\alpha}_N N^2} \label{eq:vartheta}
\end{align}
\end{pproposition}
\begin{proof}
	The proof is given in Section~\ref{sec:Proof_PPro}.
\end{proof}

For the numerical experiment of Fig.~\ref{fig:hist2}, we further plot the approximation by Pseudo-Proposition~\ref{th:0-2} in Fig.~\ref{fig:hist3}. As the figure demonstrates, the proposed approximated density tracks the empirically-evaluated histogram more closely\footnote{The interested reader can check that for $u,q<1$, we have $\hat{\sigma}_{\mathcal{C}}^2 \rightarrow \sigma_{\mathcal{C}}^2$, as we set $N\rightarrow \infty$.}~compared~with the limiting result given by Proposition~\ref{th:0}.

\begin{figure}
	\centering
	\input{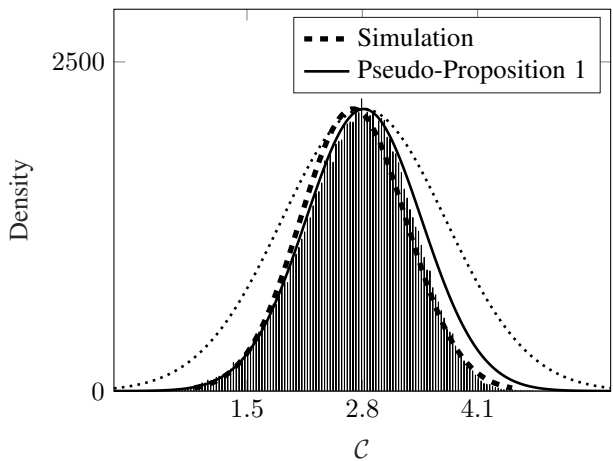}
	\caption{Histogram and the fitted Gaussian density of $\mathcal{C}$ for the extreme case of $\mR = \boldsymbol{1}_N$.  The scaled version of the density functions given by Pseudo-Proposition~\ref{th:0-2} and Proposition~\ref{th:0} are plotted with solid and dotted lines, respectively.}
	\label{fig:hist3}
\end{figure}


\section{IRS-Transmitter Dimensional Trade-Off}
\label{sec:trade}
The analytical results of this study enable us to~address~various design challenges in \ac{irs}-aided systems. In this section, we focus on a particular application: We employ the results to investigate the dimensional trade-off between the \ac{bs} and the \ac{irs}. More precisely, we try to answer this fundamental question: How does the transmit array dimension, i.e., $M$, change, when we employ an \ac{irs} to enhance the communication link?

\subsection{Dimensional Trade-Off for Ergodic Capacity}
We start our investigations by considering cases in which the \ac{nlos} links experience a fast fading process. This means that the channel varies from a short block of symbol transmit intervals to another. In this case, the performance~is~best~described by the \textit{ergodic capacity}, defined as
\begin{align}
	\bar{\mathcal{C}} = \Ex{\mathcal{C}}.
\end{align}

For a target ergodic capacity, the \ac{bs} needs to be equipped with a certain number of transmit antennas, i.e., $M$. Intuitively, this number of required antennas is expected to reduce, as we increase the \ac{irs} dimension. This draw a dimensional trade-off between $N$ and $M$: With a larger \ac{irs}, the minimum number of required \ac{bs} antennas to achieve a given target performance decreases. Proposition~\ref{th:0} enables us to quantitatively formulate this dimensional trade-off. To this end, we first use Proposition~\ref{th:0} and derive the ergodic capacity for a given $N$ and $M$ in a closed form as follows:
\begin{align}
\hspace*{-1mm}	\bar{\mathcal{C}} =\log_2\brc{1+\rho M \dbc{\alpha_{\mathrm{ d}} A_M + \kappa_{\mathrm{ r} } \bar{\alpha}_N N^2 + \bar{\alpha}_N \mathbf{\bar{h}}_{\mathrm{r}}^\her \mR \mathbf{\bar{h}}_{\mathrm{r}}}}. \hspace*{-1mm}
\end{align}
It is hence readily concluded that for a given $N$ and the target ergodic capacity $\bar{\mathcal{C}}$, $M$ needs to satisfy
\begin{align}
	M \geq \frac{2^{\bar{\mathcal{C}} } - 1}{
	\rho \brc{\alpha_{\mathrm{ d}} A_M + \kappa_{\mathrm{ r} } \bar{\alpha}_N N^2 + \bar{\alpha}_N \mathbf{\bar{h}}_{\mathrm{r}}^\her \mR \mathbf{\bar{h}}_{\mathrm{r}}}
}. \label{eq:Min_M_Erg}
\end{align}
The inequality in \eqref{eq:Min_M_Erg} formulates the trade-off between $N$ and $M$: By increasing the physical dimensions of the \ac{irs}, a smaller array is required at the \ac{bs}. It further specifies the speed of this drop, i.e., $M$ drops proportional to $N^{2q-2}$. 

We now conduct some experiments to investigate this trade-off numerically. To this end, a setting is considered in which an \ac{irs} is distanced from the \ac{bs} with $D_{\rm s} = 25$ m.  The receiver is further located at a random point at which its distances from the \ac{bs} and the \ac{irs} are $D_{\rm d} = 20$ m and $D_{\rm d} = 15$ m, respectively. We further set the \ac{aoa} and \acp{aod} to be the same as those considered in Section~\ref{sec:Numerical}, and assume that $\rho=1$. The large-scale fading coefficients are generated as
\begin{align}
	\alpha_i = \frac{\alpha_{\rm ref}}{D_i^{\varepsilon_i}}
\end{align}
for $i\in\set{\mathrm{ s},\mathrm{ d},\mathrm{ r}}$, where $\varepsilon_i$ is the path-loss exponent of link $i$ and $\alpha_{\rm ref}$ is the reference path-loss. For numerical simulations, we set $\log \alpha_{\rm ref} = 10$ dB, $\varepsilon_{\mathrm{ d}} = 3.5$ and $\varepsilon_{\mathrm{s}} = \varepsilon_{\mathrm{r}} = 2.3$.

The \ac{irs} is considered to be a square planar array, i.e., $N_\xx = N_\yy$, whose elements are distanced with a half wavelength. This means that $q=0$. The covariance matrix of the \ac{nlos} link is further set to be the Rayleigh covariance matrix, i.e., \eqref{eq:RR}. We assume that the phase-shifts of the reflecting elements are set according to Proposition~\ref{th:0}; hence, the configuration of the transmit array is not required to be taken into account.

To investigate the dimensional trade-off between $M$ and $N$, we let the ergodic capacity to be set to a given target value $\bar{\mathcal{C}}$. The number of reflecting elements is then increased gradually starting from $N=64$ and ending at $N=1296$. For each $N$, we determine the \textit{real-valued} lower bound on the minimum required $M$, denoted by $M_{\rm Erg}$, from \eqref{eq:Min_M_Erg}, i.e., 
\begin{align}
	M_{\rm Erg} = \frac{2^{\bar{\mathcal{C}} } - 1}{
		\rho \brc{\alpha_{\mathrm{ d}} A_M + \kappa_{\mathrm{ r} } \bar{\alpha}_N N^2 + \bar{\alpha}_N \mathbf{\bar{h}}_{\mathrm{r}}^\her \mR \mathbf{\bar{h}}_{\mathrm{r}} }
	}.
\end{align}

Fig.~\ref{fig:erg} plots $M_{\rm Erg}$ against the \ac{irs} size $N$, for three various choices of $\bar{\mathcal{C}}$. The minimum required \ac{bs} array size, denoted by $M_{\min}$, is then determined by quantizing $M_{\rm Erg}$ to the next larger integer.  As the figure shows, for rather large choices of $N$, the target ergodic capacity with only one antenna at the transmitter. Our further numerical investigations confirm the consistency of the result with simulations.

\begin{figure}
	\centering
%
%
\begin{tikzpicture}
	
	\begin{axis}[%
		width=2.6in,
		height=2in,
		at={(1.262in,0.697in)},
		scale only axis,
		bar shift auto,
		xmin=50,
		xmax=1350,
		xtick={100,500,900,1300},
		xticklabels={{$100$},{$500$},{$900$},{$1300$}},
		xlabel style={font=\color{white!15!black}},
		xlabel={$N$},
		ymode=log,
		ymin=0.005,
		ymax=200,
		ylabel style={font=\color{white!15!black}},
		ylabel={$M_{\rm Erg}$},
		axis background/.style={fill=white},
		legend style={legend cell align=left, align=left, draw=white!15!black}
		]

\addplot [color=black, dashdotted, line width=1.0pt]
  table[row sep=crcr]{%
64	3.96294667164879\\
81	2.48769956738759\\
100	1.63861264028754\\
121	1.12242244751533\\
144	0.794220159828345\\
169	0.577583419242135\\
196	0.429974709082893\\
225	0.326621443921504\\
256	0.252522778232451\\
289	0.198285574783479\\
324	0.157853006304537\\
361	0.127216571516936\\
400	0.103661981631294\\
441	0.08531303765341\\
484	0.0708485661578969\\
529	0.0593225667319003\\
576	0.0500471980546759\\
625	0.0425153139937137\\
676	0.0363481990762443\\
729	0.0312596862204255\\
784	0.0270311980942375\\
841	0.023494103402754\\
900	0.020517089418333\\
961	0.0179970250537501\\
1024	0.0158522592532851\\
1089	0.0140176626034524\\
1156	0.0124409203913309\\
1225	0.0110797313740419\\
1296	0.00989967782054624\\
};
\addlegendentry{$\bar{\mathcal{C}} = 1$}

\addplot [color=black, dashed, line width=1.0pt]
  table[row sep=crcr]{%
64	11.8888400149464\\
81	7.46309870216278\\
100	4.91583792086263\\
121	3.36726734254599\\
144	2.38266047948503\\
169	1.73275025772641\\
196	1.28992412724868\\
225	0.979864331764512\\
256	0.757568334697354\\
289	0.594856724350436\\
324	0.473559018913612\\
361	0.381649714550807\\
400	0.310985944893882\\
441	0.25593911296023\\
484	0.212545698473691\\
529	0.177967700195701\\
576	0.150141594164028\\
625	0.127545941981141\\
676	0.109044597228733\\
729	0.0937790586612764\\
784	0.0810935942827125\\
841	0.070482310208262\\
900	0.061551268254999\\
961	0.0539910751612502\\
1024	0.0475567777598552\\
1089	0.0420529878103572\\
1156	0.0373227611739927\\
1225	0.0332391941221256\\
1296	0.0296990334616387\\
};
\addlegendentry{$\bar{\mathcal{C}} = 2$}

\addplot [color=black, line width=1.0pt]
  table[row sep=crcr]{%
64	27.7406267015415\\
81	17.4138969717131\\
100	11.4702884820128\\
121	7.85695713260731\\
144	5.55954111879841\\
169	4.04308393469495\\
196	3.00982296358025\\
225	2.28635010745053\\
256	1.76765944762716\\
289	1.38799902348435\\
324	1.10497104413176\\
361	0.890516000618551\\
400	0.725633871419058\\
441	0.59719126357387\\
484	0.495939963105278\\
529	0.415257967123302\\
576	0.350330386382732\\
625	0.297607197955996\\
676	0.25443739353371\\
729	0.218817803542978\\
784	0.189218386659662\\
841	0.164458723819278\\
900	0.143619625928331\\
961	0.12597917537625\\
1024	0.110965814772996\\
1089	0.0981236382241668\\
1156	0.0870864427393163\\
1225	0.0775581196182932\\
1296	0.0692977447438236\\
};
\addlegendentry{$\bar{\mathcal{C}} = 3$}

\addplot [color=black, dotted, mark size=3.0pt, mark=o, mark options={solid, black}]
  table[row sep=crcr]{%
64	28\\
81	18\\
100	12\\
121	8\\
144	6\\
169	5\\
196	4\\
225	3\\
256	2\\
289	2\\
324	2\\
361	1\\
400	1\\
441	1\\
484	1\\
529	1\\
576	1\\
625	1\\
676	1\\
729	1\\
784	1\\
841	1\\
900	1\\
961	1\\
1024	1\\
1089	1\\
1156	1\\
1225	1\\
1296	1\\
};
\addlegendentry{$M_{\min}$ for $\bar{\mathcal{C}} = 3$}

\end{axis}
\end{tikzpicture}%
	\caption{IRS-transmitter dimensional trade-off for target ergodic capacity $\bar{\mathcal{C}}$: The minimum required number of transmit antennas $M_{\min}$ is determined by quantizing $M_{\rm Erg}$ to the next larger integer.}
	\label{fig:erg}
\end{figure}
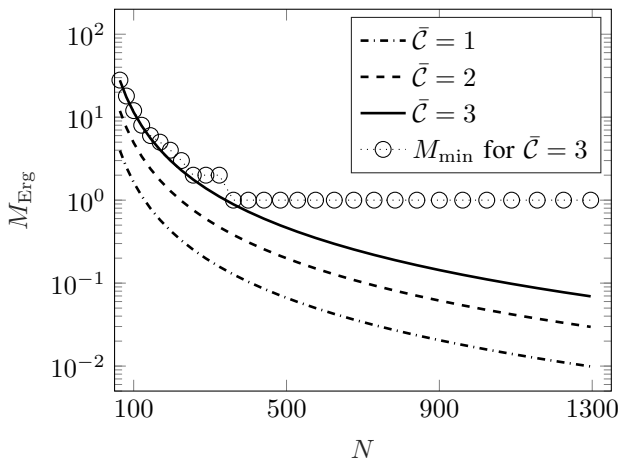

\subsection{Dimensional Trade-Off for Outage Probability}
With a slow fading process, the channel remains fixed within a long sequence of symbol intervals. The appropriate metric for performance evaluation in this case is therefore the outage probability which is defined for a given target rate $\mathcal{R}$ as \cite{bolcskei2006mimo} 
\begin{align}
	\mathcal{P}_\mathrm{out} \brc{ \mathcal{R}} =\Pr \left\lbrace 	\mathcal{C} < \mathcal{R}\right\rbrace .
\end{align}

The dimensional trade-off can be also studied for this metric using the analytical results of Section~\ref{sec:main}. From Proposition~\ref{th:0}, we can determine the outage probability as 
\begin{align}
{\mathcal{P}}_\mathrm{out} \brc{ \mathcal{R}}= \mathrm{Q}\left( \frac{\mu_{\mathcal{C}} -\mathcal{R}}{\sigma_{\mathcal{C}} }\right),
\end{align}
with $\mathrm{Q}\brc{\cdot}$ being the standard $\mathrm{Q}$-function. To guarantee achieving ${\mathcal{P}}_\mathrm{out} \brc{ \mathcal{R}} \leq p_{\rm out}$ for a given target rate $\mathcal{R}$, we need to have
\begin{align}
\frac{\mu_{\mathcal{C}} -\mathcal{R}}{\sigma_{\mathcal{C}} } \geq  \mathrm{Q}^{-1} \brc{p_{\rm out}},
\end{align}
where $\mathrm{Q}^{-1}\brc{\cdot}$ is the inverse of the $\mathrm{Q}$-function with respect to composition, i.e., $\mathrm{Q}^{-1}\brc{\mathrm{Q}^{-1} \brc{p_{\rm out}}} = p_{\rm out}$.

For a given reflecting array and rate $\mathcal{R}$, it is readily shown, after few lines of calculations, that the target outage is achieved if $M\geq M_{\rm Out}$, where $M_{\rm Out}$ a positive real-valued solution to the following fixed-point equation in $\xx$:
\begin{align}
	\mathcal{R} + \frac{C}{1 + \rho \mu \xx} \sqrt{B \xx^2 + \omega\alpha_{\mathrm{ d}} A_M \xx} = \log_2\brc{ 1+ \rho \mu \xx}.
\end{align}
Here, $\mu$ and $\omega$ are defined for the given \ac{irs} as in Proposition~\ref{th:0}. Moreover, $C$ and $B$ are given by
\begin{subequations}
	\begin{align}
	C &= \rho\mathrm{Q}^{-1} \brc{p_{\rm out}} \log_2 \mathrm{e},\\
	B &= \brc{\omega+1} \mathbf{\bar{h}}_{\mathrm{r}}^\her \mR \mathbf{\bar{h}}_{\mathrm{r}} + \alpha_{\mathrm{ d}} A_M.
\end{align}
\end{subequations}

Fig.~\ref{fig:outage} demonstrates the \ac{irs}-\ac{bs} dimensional trade-off with respect to the outage probability for various target outages. Here, the setting is set to be the same as the one considered in Fig.~\ref{fig:erg}. The target rate is moreover set to $\mathcal{R} = 3$. For $50\%$ outage, the trade-off curve recovers the one given in Fig.~\ref{fig:erg} for $\bar{\mathcal{C}} = 3$. This follows from the symmetry of the Gaussian density which leads to this property that the median coincides with the mean. We further observe that by reducing the outage probability, the trade-off figure shifts upward. This is intuitive, as for lower-outages, we require a better end-to-end link. These upward shifts are however not significant and rather negligible for large choices of $N$. This is a direct result of the channel hardening: With large $N$, the capacity term $\mathcal{C}$ is almost deterministic, and hence the outage probability tends to a step function in $\mathcal{R}$. 

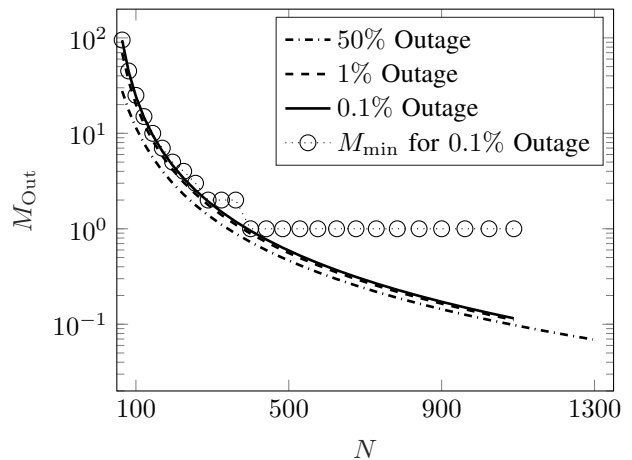
\begin{figure}
	\centering
%
%
\begin{tikzpicture}
	
		\begin{axis}[%
		width=2.6in,
		height=2in,
		at={(1.262in,0.697in)},
		scale only axis,
		bar shift auto,
		xmin=50,
		xmax=1350,
		xtick={100,500,900,1300},
		xticklabels={{$100$},{$500$},{$900$},{$1300$}},
		xlabel style={font=\color{white!15!black}},
		xlabel={$N$},
		ymode=log,
		ymin=0.02,
		ymax=200,
		ylabel style={font=\color{white!15!black}},
		ylabel={$M_{\rm Out}$},
		axis background/.style={fill=white},
		legend style={legend cell align=left, align=left, draw=white!15!black}
		]

\addplot [color=black, dashdotted, line width=1.0pt]
  table[row sep=crcr]{%
64	27.7406263268219\\
81	17.4138382379046\\
100	11.4702882662438\\
121	7.85695710882592\\
144	5.55954016222303\\
169	4.04308359083247\\
196	3.00982059211461\\
225	2.2863501015746\\
256	1.76765944758472\\
289	1.38799879157353\\
324	1.10497104410665\\
361	0.890516000494232\\
400	0.72563387141881\\
441	0.59719126209925\\
484	0.495939963104891\\
529	0.415257887248035\\
576	0.350330386372158\\
625	0.297607196609372\\
676	0.254437393430701\\
729	0.218817803538831\\
784	0.189218386659606\\
841	0.164458716206014\\
900	0.143619625887153\\
961	0.125979175337749\\
1024	0.110965814753719\\
1089	0.0981236382241665\\
1156	0.0870864427393162\\
1225	0.0775581196182932\\
1296	0.0692977447438238\\
};
\addlegendentry{$50\%$ Outage}

\addplot [color=black, dashed, line width=1.0pt]
  table[row sep=crcr]{%
64	69.3934124163671\\
81	35.0875222432154\\
100	20.1343694288796\\
121	12.5742815084042\\
144	8.33778710744616\\
169	5.78055701420045\\
196	4.14884311114169\\
225	3.06155634642394\\
256	2.31163184221546\\
289	1.77956300870561\\
324	1.39298049396204\\
361	1.10640223057386\\
400	0.890217828312413\\
441	0.724608465586045\\
484	0.595991511917524\\
529	0.494843996341768\\
576	0.414382396045433\\
625	0.349698410748963\\
676	0.297188370055708\\
729	0.254179623309656\\
784	0.218664512356473\\
841	0.189119209415066\\
900	0.164376165738337\\
961	0.143529951067361\\
1024	0.125871583578162\\
1089	0.110839864763831\\
1156	0\\
1225	0\\
1296	0\\
};
\addlegendentry{$1\%$ Outage}

\addplot [color=black, line width=1.0pt]
  table[row sep=crcr]{%
64	94.2216195778719\\
81	44.3559373710549\\
100	24.3080355223995\\
121	14.7172658233063\\
144	9.54772592773566\\
169	6.51370710966485\\
196	4.61786045523277\\
225	3.374568060114\\
256	2.5277501581517\\
289	1.93302460756113\\
324	1.50454329016385\\
361	1.18918333940877\\
400	0.952769477853032\\
441	0.772659938311832\\
484	0.633466781410359\\
529	0.524475486070078\\
576	0.43810436540003\\
625	0.36889935587089\\
676	0.312878861305673\\
729	0.267106706503266\\
784	0.22938889960656\\
841	0.198068671943968\\
900	0.171883008120143\\
961	0.149856293375605\\
1024	0.131227287936227\\
1089	0.115394854370168\\
1156	0\\
1225	0\\
1296	0\\
};
\addlegendentry{$0.1\%$ Outage}

\addplot [color=black, dotted, mark size=3.0pt, mark=o, mark options={solid, black}]
  table[row sep=crcr]{%
64	95\\
81	45\\
100	25\\
121	15\\
144	10\\
169	7\\
196	5\\
225	4\\
256	3\\
289	2\\
324	2\\
361	2\\
400	1\\
441	1\\
484	1\\
529	1\\
576	1\\
625	1\\
676	1\\
729	1\\
784	1\\
841	1\\
900	1\\
961	1\\
1024	1\\
1089	1\\
1156	0\\
1225	0\\
1296	0\\
};
\addlegendentry{$M_{\min}$ for $0.1\%$ Outage}

\end{axis}
\end{tikzpicture}%
	\caption{IRS-transmitter dimensional trade-off for various target outages: Here, the target rate is set to $\mathcal{R} = 3$.}
	\label{fig:outage}
\end{figure}

\section{Derivation of the Main Result}
\label{sec:derive}
This section proves the asymptotic result given in Proposition~\ref{th:0}. The derivation follows three main steps:
\begin{itemize}
	\item First, we derive the distribution of the end-to-end \ac{snr} gain $\Gamma$ for an arbitrary $M$ and $N$.
	\item We then bound the mean and variance of the \ac{snr} gain in terms of the spectrum of the covariance matrix $\mR$.
\item Finally, we send $N\rightarrow\infty$ and show that the maximum mutual information converges in distribution to a Gaussian random variable, if the maximum eigenvalue of $\mR$ grows \textit{sub-linearly} in $N$. 
\end{itemize}

\subsection{Distribution of End-to-End SNR}
Theorem~\ref{th:1} determines the statistics of the end-to-end \ac{snr} gain $\Gamma$ for an arbitrary choice of system dimensions.
\begin{theorem}
	\label{th:1}
	Let the vector of phase-shifts at the \ac{irs} be set to $\bbeta = [\beta_1,\ldots,\beta_N]^\trp$. Define $E\brc{\cdot}$ as
				\begin{align}
		E\brc{\bbeta} &=    \bar{\alpha}_N\;
		\mathbf{\bar{g}}_{\mathrm{r}}^\trp \mPhi \brc{\bbeta} {\mR} \mPhi^\her \brc{\bbeta}	\mathbf{\bar{g}}_{\mathrm{r}}^*,  \label{eq:E}
\end{align}
for matrix $\mPhi \brc{\bbeta} = \mathrm{diag} \set{ \mathrm{e}^{-\jj \beta_1},\ldots,\mathrm{e}^{-\jj \beta_N} }$, $\bar{\alpha}_N$ given in \eqref{eq:bar_alpha} and $\mathbf{\bar{g}}_{\mathrm{r}} = \mathbf{a}_N\left( \varphi_{\mathrm{r}1} , \theta_{\mathrm{r}1} \right)$. Let $F_\zeta\brc{\bbeta}$ and $\Lambda\brc{\bbeta}$ be
\begin{subequations}
			\begin{align}
		F_\zeta\brc{\bbeta} &=\bar{\alpha}_N\kappa_{\mathrm{ r} }
		\abs{ \mathbf{\bar{g}}_{\mathrm{r}}^\trp  \mPhi \brc{\bbeta} 	\mathbf{\bar{h} }_{\mathrm{r} } }^2 + \zeta E\brc{\bbeta} \label{eq:F} \\
		\Lambda\brc{\bbeta} &= \frac{\alpha_{\mathrm{ d}} A_M }{M}  + E\brc{\bbeta}. \label{eq:Lambda}
\end{align}
\end{subequations}
Then,~the end-to-end \ac{snr} gain $\Gamma$ is distributed with a generalized chi-square distribution of order $2M$ whose density is given by
\begin{align}
	f_\Gamma\brc{\gamma}= \frac{\tau_M}{ \Lambda\brc{\bbeta} } \int_{-\infty}^{+\infty} f_1\brc{\frac{\gamma - v}{ \displaystyle  \Lambda\brc{\bbeta} } } f_0 \brc{\tau_M v } \mathrm{d} v,
\end{align}
for $\tau_M ={M}/{\alpha_{\mathrm{ d}} A_M }$ and the density functions 
\begin{subequations}
	\begin{align}
	f_0\brc{v}&=\frac{v^{M-2} }{\brc{M-2}!} \mathrm{e}^{-v}, \label{eq:f_0}\\
	f_1\brc{v}&= \mathrm{e}^{-v - \lambda\brc{\bbeta}  } I_0 \brc{2 \sqrt{ \lambda\brc{\bbeta} v } } , \label{eq:f_1}
\end{align}
\end{subequations}
with $\lambda\brc{\bbeta} = {F_0\brc{\bbeta}} / {\Lambda\brc{\bbeta}}$, and denoting the modified Bessel function of the first kind and order zero. %
Moreover, the mean and variance of $\Gamma$ are given by
		\begin{align}
			\mu_\Gamma \brc{\bbeta}  &= \alpha_{\mathrm{ d}} A_M  + F_1\brc{\bbeta} ,
		\end{align}
	and \eqref{eq:sig_Gamma}, given on the top of the next page, respectively.
	\begin{figure*}[t]
 		\begin{align}
 	\sigma_\Gamma^2 \brc{\bbeta}  &=  \frac{\alpha_{\mathrm{ d}} A_M }{M} \dbc{ 2 F_1\brc{\bbeta}  + \alpha_{\mathrm{ d}} A_M }   +  E\brc{\bbeta}\dbc{F_0\brc{\bbeta}  + F_1\brc{\bbeta}} . \label{eq:sig_Gamma}
 \end{align}
\hrule
	\end{figure*}
	\end{theorem}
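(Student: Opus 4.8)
The plan is to compute the end-to-end SNR gain $\Gamma = \norm{\mathbf{h}}^2/M$ by decomposing $\mathbf{h}$ into a deterministic part and a Gaussian fluctuation, and then recognizing the resulting quadratic form as a noncentral/central chi-square mixture. First I would write the $m$-th end-to-end coefficient $h_m = h_{\mathrm{d},m} + \sum_n \mathrm{e}^{-\jj\beta_n} h_{\mathrm{r},n} t_{nm}$, substitute the channel models for $h_{\mathrm{d},m}$, $h_{\mathrm{r},n}$ and $t_{nm}$, and use the LoS factorizations $\bar{\mathbf{T}} = \mathbf{a}_N(\varphi_{\mathrm{r}1},\theta_{\mathrm{r}1})\mathbf{a}_M(\varphi_{\mathrm{t}2},\theta_{\mathrm{t}2})^\her$ and $\bar{\mathbf{h}}_{\mathrm{r}} = \mathbf{a}_N(\varphi_{\mathrm{t}1},\theta_{\mathrm{t}1})$. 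The key observation is that the reflected term splits as $\sqrt{\bar\alpha_N}\,[\,\mathbf{a}_M(\varphi_{\mathrm{t}2},\theta_{\mathrm{t}2})\,]^*_m \cdot \mathbf{\bar g}_{\mathrm{r}}^\trp \mPhi(\bbeta)(\sqrt{\kappa_{\mathrm{r}}/(\kappa_{\mathrm{r}}+1)}\,\bar{\mathbf{h}}_{\mathrm{r}} + \sqrt{1/(\kappa_{\mathrm{r}}+1)}\,\tilde{\mathbf{h}}_{\mathrm{r}})$, so the entire reflected contribution to $h_m$ is a \emph{scalar} random variable times the known vector $\mathbf{a}_M(\varphi_{\mathrm{t}2},\theta_{\mathrm{t}2})^*$; call that scalar $g = g_{\mathrm{LoS}} + g_{\mathrm{NLoS}}$ where $g_{\mathrm{NLoS}} \sim \mathcal{CN}(0, E(\bbeta))$ since $\tilde{\mathbf{h}}_{\mathrm{r}}\sim\mathcal{CN}(\mathbf{0},\mR)$. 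Then $\mathbf{h} = \sqrt{\alpha_{\mathrm{d}}A_M}\,\tilde{\mathbf{h}}_{\mathrm{d}} + g\,\mathbf{a}_M(\cdot)^*$, i.e. a sum of an isotropic complex Gaussian vector and an independent random scalar along a fixed unit-norm direction (note $\norm{\mathbf{a}_M}^2 = M$).

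Next I would evaluate $\norm{\mathbf{h}}^2$ by projecting onto the one-dimensional subspace spanned by $\mathbf{a}_M(\cdot)^*$ and its orthogonal complement. Writing $\tilde{\mathbf{h}}_{\mathrm{d}} = c\,\mathbf{a}_M(\cdot)^*/\sqrt{M} + \mathbf{w}$ with $\mathbf{w}\perp\mathbf{a}_M(\cdot)^*$, the component $c\sim\mathcal{CN}(0,1)$, $\mathbf{w}$ is an isotropic $\mathcal{CN}$ vector on an $(M-1)$-dimensional space, and the two are independent. Hence $\norm{\mathbf{h}}^2 = |\sqrt{\alpha_{\mathrm{d}}A_M}\,c + \sqrt{M}\,g|^2 + \alpha_{\mathrm{d}}A_M\norm{\mathbf{w}}^2$. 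Conditioned on $g$, the first term is $\alpha_{\mathrm{d}}A_M$ times a noncentral chi-square (in the complex sense, i.e. exponential-type) random variable with noncentrality proportional to $M|g|^2/(\alpha_{\mathrm{d}}A_M)$, and $\norm{\mathbf{w}}^2$ is (half of) a chi-square with $2(M-1)$ degrees of freedom, independent of it. Dividing by $M$ and collecting the scalings $\tau_M = M/(\alpha_{\mathrm{d}}A_M)$ and $\Lambda(\bbeta) = \alpha_{\mathrm{d}}A_M/M + E(\bbeta)$ gives the stated convolution form: $f_0$ is the Gamma$(M-1,1)$ density coming from $\norm{\mathbf{w}}^2$, and $f_1$ is the density of the noncentral-exponential term after conditioning and integrating out $g_{\mathrm{NLoS}}$, which absorbs the Gaussian part of $g$ into the scale $\Lambda(\bbeta)$ and leaves the LoS part as the Bessel-function noncentrality $\lambda(\bbeta) = F_0(\bbeta)/\Lambda(\bbeta)$; the $I_0$ arises from averaging $\mathrm{e}^{\cdots}$ over the uniform phase of the complex Gaussian cross term, exactly as in the Rician-fading envelope computation. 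I would verify $F_\zeta(\bbeta)$ and $E(\bbeta)$ match the definitions \eqref{eq:E}--\eqref{eq:F} by a direct bookkeeping of the $\bar\alpha_N$, $\kappa_{\mathrm{r}}$, and $1/(\kappa_{\mathrm{r}}+1)$ factors.

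Finally, for the moments I would either integrate against $f_\Gamma$ or, more cleanly, use the decomposition directly: $\mathcal{E}[\norm{\mathbf{h}}^2] = \alpha_{\mathrm{d}}A_M\,\mathcal{E}[|c|^2 + \norm{\mathbf{w}}^2] + M\,\mathcal{E}[|g|^2] + 2\sqrt{\alpha_{\mathrm{d}}A_M M}\,\mathrm{Re}\,\mathcal{E}[c^*g]$, where the cross term vanishes by independence and zero-mean of $c$, $\mathcal{E}[|c|^2+\norm{\mathbf{w}}^2] = M$, and $\mathcal{E}[|g|^2] = \bar\alpha_N\kappa_{\mathrm{r}}|\mathbf{\bar g}_{\mathrm{r}}^\trp\mPhi(\bbeta)\bar{\mathbf{h}}_{\mathrm{r}}|^2 + E(\bbeta) = F_1(\bbeta)$; dividing by $M$ yields $\mu_\Gamma(\bbeta) = \alpha_{\mathrm{d}}A_M + F_1(\bbeta)$. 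For the variance I would compute $\mathcal{E}[\norm{\mathbf{h}}^4]$ using the same split, invoking the fourth-moment (Isserlis) formula for the complex Gaussian pieces and the known second moment of $|g|^2$ (a noncentral complex chi-square, so $\mathrm{Var}[|g|^2] = 2 E(\bbeta)F_0(\bbeta)/\kappa_{\mathrm{r}}\cdot$\,... — to be tracked carefully), which after simplification collapses to \eqref{eq:sig_Gamma}. The main obstacle I expect is the careful bookkeeping in this last step: keeping the central versus noncentral contributions of $g$ separate, correctly handling the $(M-1)$ versus $M$ degree-of-freedom split, and making sure the cross moments between the Gaussian projection $c$ and the independent scalar $g$ are the only ones that survive — this is where a sign or a factor of $M$ is easy to lose, and it is the only place where the $(M-1)/M$ coefficient in \eqref{eq:sig_Gamma} can be pinned down.
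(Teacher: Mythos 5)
Your proposal is correct and follows essentially the same route as the paper: both exploit the fact that the reflected contribution is a random scalar times the fixed direction $\mathbf{a}_M\brc{\varphi_{\mathrm{t}2},\theta_{\mathrm{t}2}}^*$, reduce $\Gamma$ to the independent sum $\frac{\alpha_{\mathrm{d}}A_M}{M}V_0 + \Lambda\brc{\bbeta}V_1$ with $V_0$ a central chi-square with $2M-2$ degrees of freedom and $V_1$ a noncentral chi-square with noncentrality $\lambda\brc{\bbeta}=F_0\brc{\bbeta}/\Lambda\brc{\bbeta}$, and read off the density, mean, and variance from there. Your direct orthogonal projection of $\tilde{\mathbf{h}}_{\mathrm{d}}$ onto that direction is just a streamlined version of the paper's whitening-plus-eigendecomposition of the covariance $\mC=\alpha_{\mathrm{d}}A_M\mI_M+M\tilde{\sigma}^2\brc{\bbeta}\bv^*\bv^\trp$, and your variance bookkeeping (via $\mathrm{Var}\dbc{\abs{g}^2}=E\brc{\bbeta}^2+2E\brc{\bbeta}F_0\brc{\bbeta}=E\brc{\bbeta}\dbc{F_0\brc{\bbeta}+F_1\brc{\bbeta}}$) does collapse to \eqref{eq:sig_Gamma}.
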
 

\begin{proof}
	Let $\bbeta = [\beta_1,\ldots,\beta_N]^\trp$. For a given index $m\in [M]$, define $g_m\brc{\bbeta}$ be defined as
\begin{align}
	g_m \brc{\bbeta} = \sum_{n=1}^{N} \mathrm{e}^{-\jj\beta_n} h_{\mathrm{r}, n} t_{nm}.
\end{align}
Here, $\beta_n$ and  $ t_{nm}$ are deterministic scalars, and $h_{\mathrm{r}, n}$ are jointly Gaussian for $n\in [N]$. We can hence conclude that $g_m\brc{\bbeta}$ is distributed Gaussian with mean and variance
\begin{subequations}
	\begin{align}
	\mu_m \brc{\bbeta} & =\frac{ \mu_0 \brc{\bbeta}}{\sqrt{M}}
	\mathrm{e}^{ - \jj \Phi_m\left( \varphi_{\mathrm{t}2} ,\theta_{\mathrm{t}2} \right) }, \label{eq:mu_m}\\
	\tilde{\sigma}^2 \brc{\bbeta} &=  \bar{\alpha}_N\;
	\mathbf{\bar{g}}_{\mathrm{r}}^\trp \mPhi \brc{\bbeta} \mR \mPhi^\her \brc{\bbeta}	\mathbf{\bar{g}}_{\mathrm{r}}^*, 
\end{align}
\end{subequations}
respectively, where $\mathbf{\bar{g}}_{\mathrm{r}} = \mathbf{a}_N\left( \varphi_{\mathrm{r}1} , \theta_{\mathrm{r}1} \right)$, $\bar{\alpha}_N$ is defined in \eqref{eq:bar_alpha},
	\begin{align}
	\mu_0 \brc{\bbeta} & =\sqrt{ M \kappa_\mathrm{r} \bar{\alpha}_N }
	\mathbf{\bar{g}}_{\mathrm{r}}^\trp \mPhi \brc{\bbeta}	\mathbf{\bar{h}}_{\mathrm{r}},
\end{align}
and 
	$\mPhi \brc{\bbeta} = \mathrm{diag} \set{ \mathrm{e}^{-\jj \beta_1},\ldots,\mathrm{e}^{-\jj \beta_N} }$.

As $h_m$ for $m\in\dbc{M}$ have common term $g_m\brc{\bbeta}$, the entries of $\bh$ are correlated. We hence determine the covariance of $\bh$. To this end, we first note that 
\begin{align}
	\bmu\brc{\bbeta} = \dbc{\mu_1 \brc{\bbeta}, \ldots , \mu_M \brc{\bbeta}}^\trp = \mu_0 \brc{\bbeta}  \bv^*
\end{align}
where $\bv = \mathbf{a}_M\left( \varphi_{\mathrm{t}2} , \theta_{\mathrm{t}2} \right)/\sqrt{M}$. As the result, the covariance matrix is given by
\begin{subequations}
	\begin{align}
	\mC &= \Ex{ \brc{\bh - \mu_0 \brc{\bbeta} \bv^*} \brc{\bh^\her - \mu_0^* \brc{\bbeta} \bv^\trp}},\\
	&= \alpha_{\mathrm{ d}} A_M  \mI_M + M \tilde{\sigma}^2 \brc{\bbeta} \bv^* \bv^\trp = \mC_0^2,
\end{align}
\end{subequations}
where $\mC_0 = \sqrt{\alpha_{\mathrm{ d}} A_M } \; \mI_M + \varpi \brc{\bbeta} \bv^* \bv^\trp$ with $\varpi$ being 
	\begin{align}
\varpi \brc{\bbeta} =    \sqrt{\alpha_{\mathrm{ d}} A_M + M {\tilde{\sigma}^2 \brc{\bbeta} } } - \sqrt{\alpha_{\mathrm{ d}} A_M }.
\end{align}

We now represent the end-to-end channel $\bh$ as $\bh = \mC_0 \bh_0$, where $\bh_0 \sim \mathcal{CN}\left( \mC_0^{-1} \bmu\brc{\bbeta}, \mI_M\right)$. Consequently, we have
\begin{align}
\Gamma = \frac{1}{M} \bh_0^\her \mC \bh_0. \label{eq:Gam}
\end{align}
To determine this weighted norm, we use the~eigenvalue~decomposition of $\mC$ to write $\mC = \mV^* \mSigma \mV^\trp$, where $\mV\in\setC^{M\times M}$ is a unitary matrix whose \textit{last} column is $\bv$ and $\mSigma $ is an $M\times M$ diagonal matrix defined as
\begin{align}
	\mSigma = \mathrm{diag} \set{ \alpha_{\mathrm{ d}} A_M , \ldots,\alpha_{\mathrm{ d}} A_M ,\alpha_{\mathrm{ d}} A_M +M  \tilde{\sigma}^2 \brc{\bbeta} }.
\end{align}
By replacing into \eqref{eq:Gam}, we have
\begin{align}
	\Gamma = \frac{\alpha_{\mathrm{ d}} A_M }{M}  \sum_{m=1}^{M-1} \abs{r_m}^2 + \frac{\alpha_{\mathrm{ d}}A_M +M \tilde{\sigma}^2 \brc{\bbeta} }{M}   \abs{r_M}^2 \label{eq:Gamma_1}
\end{align}
where $r_m$ is the $m$-th entry of $\mathbf{r} = \mV^\trp \bh_0$. Noting that $\mV^\trp$ is unitary, we conclude that $\mathbf{r}\sim \mathcal{CN}\left(   \bmu_{\mathrm{r}}\brc{\bbeta}, \mI_M\right)$, where
\begin{align}
	\bmu_{\mathrm{r}}\brc{\bbeta} = \mV^\trp\mC_0^{-1} \bmu \brc{\bbeta}.
\end{align}

Since $\mC = \mC_0^{2}$, the matrix $\mC_0^{-1}$ is decomposed as 
\begin{align}
	\mC_0^{-1} = \mV^* \sqrt{\mSigma}^{\; -1} \mV^\trp,
\end{align}
where $\sqrt{\mSigma}$ is a diagonal matrix whose first $M-1$ diagonal entries is $\sqrt{\alpha_{\mathrm{ d}} A_M }$ and the last one is $\sqrt{\alpha_{\mathrm{ d}} A_M }+\varpi \brc{\bbeta}$. %
Hence, we can write
\begin{subequations}
	\begin{align}
	\bmu_{\mathrm{r}}\brc{\bbeta} &= \mV^\trp \mV^* \sqrt{\mSigma}^{\; -1} \mV^\trp \bmu \brc{\bbeta}\\
	&= \mu_0 \brc{\bbeta}  \sqrt{\mSigma}^{\; -1} \mV^\trp \bv^*.
\end{align}
\end{subequations}
Noting that $\bv$ is the last column of the \textit{unitary} matrix $\mV$, we conclude that $\mV^\trp \bv^*$ is a standard base vector with the first $M-1$ being zero and the last entry being $1$. This concludes that $\mu_{\mathrm{r},m}\brc{\bbeta} = 0$ for $m\in \dbc{M-1}$ and 
\begin{subequations}
	\begin{align}
	\mu_{\mathrm{r},M}\brc{\bbeta} &= \frac{\mu_0 \brc{\bbeta} }{\sqrt{\alpha_{\mathrm{ d}}A_M }+\varpi \brc{\bbeta} }\\
	&= \frac{\mu_0 \brc{\bbeta} }{\sqrt{\alpha_{\mathrm{ d}} A_M + M {\tilde{\sigma}^2 \brc{\bbeta} } }  } \label{eq:mu_r_M}
\end{align}
\end{subequations}

We now define \textit{independent} random variables
\begin{subequations}
	\begin{align}
	V_0 &=   \sum_{m=1}^{M-1} \abs{r_m}^2, \label{eq:v0}\\
	V_1 &=    \abs{r_M}^2. \label{eq:v1}
\end{align}
\end{subequations}
By definition, $V_0$ is distributed chi-square with $2M-2$ degrees for freedom whose mean and variance are $M-1$ and whose density function is given by \eqref{eq:f_0}. %
The random variable $V_1$ is further distributed non-central chi-square with two degrees of freedom and non-centrality parameter 
\begin{align}
	\lambda\brc{\bbeta} = \abs{\mu_{\mathrm{r},M}\brc{\bbeta}}^2.
\end{align}
Consequently, the mean and variance of $V_1$ are $1+\lambda\brc{\bbeta}$ and $1+ 2 \lambda\brc{\bbeta}$, respectively, and its probability density is given by \eqref{eq:f_1} in Theorem~\ref{th:1}.

The \ac{snr} gain $\Gamma$ is determined in terms of $V_0$ and $V_1$ as
\begin{align}
	\Gamma = \frac{\alpha_{\mathrm{ d}} A_M }{M}  V_0 + \frac{\alpha_{\mathrm{ d}} A_M +M \tilde{\sigma}^2 \brc{\bbeta} }{M}   V_1. \label{eq:Gamma_sum}
\end{align}
Noting that $V_0$ and $V_1$ are independent, the proof of Theorem~\ref{th:1} is concluded after defining the functions $E\brc{\bbeta} = \tilde{\sigma}^2\brc{\bbeta}$ and $F_\zeta\brc{\bbeta} = \abs{\mu_0 \brc{\bbeta}}^2/M + \zeta E\brc{\bbeta}$.
\end{proof} 

\subsection{Maximum Average End-to-End SNR}
Using \acp{irs} in the system, we are often interested in tuning the \ac{irs} phase-shifts, such that the average end-to-end \ac{snr} is maximized, i.e., $\mu_\Gamma \brc{\bbeta}$ is maximized over $\bbeta$. Considering Theorem~\ref{th:1}, the maximum mean \ac{snr} is not necessarily found in a closed-form. Nevertheless, closed-form lower and upper bounds can be derived in terms of the spectrum of $\mR$.

\begin{theorem}
	\label{th:2}
	Let the phase-shifts at the  \ac{irs} be set to the vector $ \bbeta^\star = [\beta^\star_1,\ldots,\beta^\star_N]^\trp$ with $\beta_n^\star$ being specified in \eqref{eq:beta_n_star}. %
	Then, the average end-to-end \ac{snr} gain is bounded as
\begin{subequations}
			\begin{align}
			  \mu_\Gamma \brc{\bbeta^\star} &\leq  \xi_{\max} \brc{N} +  \kappa_{\mathrm{r}} \bar{\alpha}_N N^2 , \\
			 \mu_\Gamma \brc{\bbeta^\star} &\geq \xi_{\min} \brc{N} +  \kappa_{\mathrm{r}} \bar{\alpha}_N N^2  
		\end{align}
\end{subequations}
	where $\xi_i \brc{N}$ for $i\in\set{\min , \max}$ is defined as
	\begin{align}
		\xi_i \brc{N} = {\alpha_{\mathrm{ d}} A_M+ \bar{\alpha}_N\lambda_i N } \label{eq:xi_i}
	\end{align}
with $\bar{\alpha}_N$ being given in \eqref{eq:bar_alpha}, and $\lambda_{\min}$ and $\lambda_{\max}$ denoting the minimum and maximum eigenvalue of $\mR$, respectively. The variance of $\Gamma$ is further bounded as 
	\begin{align}
	P_{\min}\brc{N} \leq \sigma_\Gamma^2 \brc{\bbeta^\star} &\leq P_{\max}\brc{N},
\end{align}
where $P_{i}\brc{N}$ is given for $i\in\set{\min,\max}$ in \eqref{eq:upp_sig} at the top of the next page.
\begin{figure*}[t]
	\begin{align}
		P_{i}\brc{N} =  \frac{ \alpha_{\mathrm{ d}} A_M }{M} +  \frac{2\alpha_{\mathrm{ d}} A_M \lambda_{i} }{M} \bar{\alpha}_N N
		+ \brc{ \frac{2\alpha_{\mathrm{ d}} A_M \kappa_{\rm r} }{M} \bar{\alpha}_N +\bar{\alpha}_N^2 \lambda_{i}^2 } N^2
		+ 2 \lambda_{i} \kappa_{\mathrm{ r} } \bar{\alpha}_N^2 N^3 \label{eq:upp_sig}
	\end{align}
\hrule
\end{figure*}
\end{theorem}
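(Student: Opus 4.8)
The plan is to reduce everything to Theorem~\ref{th:1} by evaluating the functions $E(\bbeta)$, $F_0(\bbeta)$ and $F_1(\bbeta)$ at the specific vector $\bbeta^\star$, and then to use the extremal (Rayleigh--quotient) characterization of the eigenvalues of $\mR$ to turn the resulting closed forms into the stated two-sided bounds on $\mu_\Gamma(\bbeta^\star)$ and $\sigma_\Gamma^2(\bbeta^\star)$.

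First I would exploit the phase alignment that is built into $\bbeta^\star$. Inserting $\beta_n^\star$ from \eqref{eq:beta_n_star} together with the array-response definitions, so that $[\mathbf{\bar{g}}_{\mathrm{r}}]_n=\mathrm{e}^{\frac{2\pi\jj}{\lambda}\Pi_n(\varphi_{\mathrm{r}1},\theta_{\mathrm{r}1})}$, $[\mathbf{\bar{h}}_{\mathrm{r}}]_n=\mathrm{e}^{\frac{2\pi\jj}{\lambda}\Pi_n(\varphi_{\mathrm{t}1},\theta_{\mathrm{t}1})}$ and $\mPhi(\bbeta^\star)=\mathrm{diag}\{\mathrm{e}^{-\jj\beta_n^\star}\}$, one verifies entrywise that $\mathbf{\bar{g}}_{\mathrm{r}}^\trp\mPhi(\bbeta^\star)=\mathbf{\bar{h}}_{\mathrm{r}}^\her$, equivalently $\mPhi^\her(\bbeta^\star)\mathbf{\bar{g}}_{\mathrm{r}}^*=\mathbf{\bar{h}}_{\mathrm{r}}$. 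Two consequences follow: (i) $\mathbf{\bar{g}}_{\mathrm{r}}^\trp\mPhi(\bbeta^\star)\mathbf{\bar{h}}_{\mathrm{r}}=\mathbf{\bar{h}}_{\mathrm{r}}^\her\mathbf{\bar{h}}_{\mathrm{r}}=\norm{\mathbf{\bar{h}}_{\mathrm{r}}}^2=N$, since each entry of $\mathbf{\bar{h}}_{\mathrm{r}}$ has unit modulus, whence $F_0(\bbeta^\star)=\bar{\alpha}_N\kappa_{\mathrm{r}}N^2$; and (ii) $E(\bbeta^\star)=\bar{\alpha}_N\,\mathbf{\bar{h}}_{\mathrm{r}}^\her\mR\,\mathbf{\bar{h}}_{\mathrm{r}}$, so that $F_1(\bbeta^\star)=F_0(\bbeta^\star)+E(\bbeta^\star)=\bar{\alpha}_N\kappa_{\mathrm{r}}N^2+\bar{\alpha}_N\,\mathbf{\bar{h}}_{\mathrm{r}}^\her\mR\,\mathbf{\bar{h}}_{\mathrm{r}}$. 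Because $\mR$ is a Hermitian positive-semidefinite covariance matrix, its eigenvalues are real and nonnegative, and the min--max (Rayleigh quotient) inequality \cite{horn2013matrix} applied at $\mathbf{\bar{h}}_{\mathrm{r}}$ with $\norm{\mathbf{\bar{h}}_{\mathrm{r}}}^2=N$ gives $\bar{\alpha}_N\lambda_{\min}N\le E(\bbeta^\star)\le\bar{\alpha}_N\lambda_{\max}N$. Substituting $F_1(\bbeta^\star)$ into $\mu_\Gamma(\bbeta^\star)=\alpha_{\mathrm{d}}A_M+F_1(\bbeta^\star)$ from Theorem~\ref{th:1} and using this bound yields at once $\xi_{\min}(N)+\kappa_{\mathrm{r}}\bar{\alpha}_N N^2\le\mu_\Gamma(\bbeta^\star)\le\xi_{\max}(N)+\kappa_{\mathrm{r}}\bar{\alpha}_N N^2$ with $\xi_i(N)$ as in \eqref{eq:xi_i}.

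For the variance I would plug the same closed forms into \eqref{eq:sig_Gamma}. Writing $\varepsilon=E(\bbeta^\star)$ and $\phi_0=F_0(\bbeta^\star)=\bar{\alpha}_N\kappa_{\mathrm{r}}N^2$, the expression of Theorem~\ref{th:1} becomes $\sigma_\Gamma^2(\bbeta^\star)=\tfrac{\alpha_{\mathrm{d}}A_M}{M}\brc{2\phi_0+2\varepsilon+\alpha_{\mathrm{d}}A_M}+\varepsilon\brc{2\phi_0+\varepsilon}$, which is a polynomial in $\varepsilon$ with nonnegative coefficients and is therefore nondecreasing on $\varepsilon\ge0$. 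Replacing $\varepsilon$ by its lower bound $\bar{\alpha}_N\lambda_{\min}N$ and its upper bound $\bar{\alpha}_N\lambda_{\max}N$ and collecting powers of $N$ reproduces exactly $P_{\min}(N)$ and $P_{\max}(N)$ of \eqref{eq:upp_sig}, giving $P_{\min}(N)\le\sigma_\Gamma^2(\bbeta^\star)\le P_{\max}(N)$ and completing the argument.

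The only step requiring genuine care is the first one: the entrywise phase cancellation must be checked against the exact definitions of $\Pi_n(\cdot,\cdot)$, $\mathbf{a}_N(\cdot,\cdot)$ and the diagonal structure of $\mPhi(\cdot)$, and one must be explicit that $\mPhi^\her(\bbeta^\star)\mathbf{\bar{g}}_{\mathrm{r}}^*$ is \emph{literally} $\mathbf{\bar{h}}_{\mathrm{r}}$ rather than merely proportional to it; once that identity is in hand the rest is routine algebra. A secondary point worth stating explicitly is the validity of the monotonicity-in-$\varepsilon$ argument for the variance, which hinges on $\phi_0\ge0$, $\alpha_{\mathrm{d}}A_M\ge0$ and $\varepsilon\ge\bar{\alpha}_N\lambda_{\min}N\ge0$, i.e. on positive-semidefiniteness of $\mR$. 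I do not expect any obstacle beyond these bookkeeping checks.
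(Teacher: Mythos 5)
Your proposal is correct and follows essentially the same route as the paper: it evaluates $F_0$ and $E$ at $\bbeta^\star$ via the phase-alignment identity $\mathbf{\bar{g}}_{\mathrm{r}}^\trp \mPhi\brc{\bbeta^\star} = \mathbf{\bar{h}}_{\mathrm{r}}^\her$ (the content of the paper's Lemma~\ref{lem:1}), bounds $E\brc{\bbeta^\star}$ by the Rayleigh quotient of $\mR$ (the paper's Lemma~\ref{lem:2}), and substitutes into the mean and variance expressions of Theorem~\ref{th:1}. The only caveat is that your substitution actually yields $\alpha_{\mathrm{d}}^2 A_M^2/M$ as the constant term of the variance bound rather than the $\alpha_{\mathrm{d}} A_M/M$ printed in \eqref{eq:upp_sig}, which appears to be a typo in the paper rather than an error in your argument.
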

\begin{proof}
	We start the proof by stating the following lemma:
\begin{lemma}
		\label{lem:1}
	For the function $F_0\brc{\bbeta}$ defined in \eqref{eq:F}, we have
	\begin{align}
		 \max_{\beta_1,\ldots,\beta_N} F_0 \brc{\bbeta}  = \kappa_{\mathrm{r}} \bar{\alpha}_N N^2,
	\end{align}
being obtained by setting $\beta_n = \beta^\star_n$ given in \eqref{eq:beta_n_star} for $n\in \dbc{N}$. 
\end{lemma}
\begin{proof}
	The proof is given in the Appendix~\ref{app:A}.
\end{proof}
By setting $\beta_n = \beta^\star_n$ in Theorem~\ref{th:1}, the mean and variance of the \ac{snr} gain reduce to
	\begin{align}
	\mu_\Gamma \brc{\bbeta^\star} &=  \alpha_{\mathrm{ d}} A_M+  E\brc{\bbeta^\star} + \kappa_{\mathrm{r}}  \bar{\alpha}_N N^2 , 
\end{align}
and the expression given in \eqref{eq:sig_star} at the top of the next page, respectively.
\begin{figure*}[t]
			\begin{align}
		\sigma_\Gamma^2 \brc{\bbeta^\star}  &=  \frac{\alpha_{\mathrm{ d}} A_M }{M} \dbc{ 2\kappa_{\mathrm{r}}  \bar{\alpha}_N N^2+ 2 E\brc{\bbeta^\star} + \alpha_{\mathrm{ d}} A_M }  +  E\brc{\bbeta^\star}\dbc{ E\brc{\bbeta^\star} + 2 \kappa_{\mathrm{r}}  \bar{\alpha}_N N^2} , \label{eq:sig_star}
		\end{align}
\hrule
		\begin{align}
		\sigma_{\Gamma,i}^2   &=  \frac{\alpha_{\mathrm{ d}} A_M }{M} \brc{ 2\kappa_{\mathrm{r}}  \bar{\alpha}_N N^2+ 2 \maE_{i} - \alpha_{\mathrm{ d}} A_M }  +  \brc{\maE_{i} - \alpha_{\mathrm{ d}} A_M } \brc{ \maE_{i} + 2 \kappa_{\mathrm{r}}  \bar{\alpha}_N N^2 - \alpha_{\mathrm{ d}} A_M } , \label{eq:sig_max}
	\end{align}
\hrule
\end{figure*}
The mean can hence be bounded as
	\begin{align}
\maE_{\min} + \kappa_{\mathrm{r}}  \bar{\alpha}_N N^2 \leq	\mu_\Gamma \brc{\bbeta^\star} \leq \maE_{\max} + \kappa_{\mathrm{r}}  \bar{\alpha}_N N^2, \label{eq:Mu_B}
\end{align}
and the variance reads
\begin{align}
	\sigma_{\Gamma,\min}^2 \leq \sigma_\Gamma^2 \brc{\bbeta^\star} &\leq  \sigma_{\Gamma,\max}^2, \label{eq:SigB} 
\end{align}
with $\sigma_{\Gamma,i}^2$ being defined in \eqref{eq:sig_max} at the top of the next page for $i\in \set{\min,\max}$. Here, 
\begin{align}
	\maE_{\max} &= \alpha_{\mathrm{ d}} A_M+  \max_{\bbeta\in\setR^N} E\brc{\bbeta} \\
\maE_{\min} &= \alpha_{\mathrm{ d}} A_M+  \min_{\bbeta\in\setR^N} E\brc{\bbeta}
\end{align}
whose values are given by the following lemma:
\begin{lemma}
	\label{lem:2}
	For the function $E\brc{\bbeta}$ defined in \eqref{eq:E}, we have
	\begin{align}
	\xi_{\min} \brc{N} \leq \alpha_{\mathrm{ d}} A_M + E \brc{\bbeta}  \leq \xi_{\max} \brc{N}
	\end{align}
for any $\bbeta\in\setR^N$, where the function $\xi_i\brc{N}$ is defined in \eqref{eq:xi_i} for $i\in \set{\min,\max}$.
\end{lemma}
\begin{proof}
	The proof is given in the Appendix~\ref{app:B}.
\end{proof}
By substituting the results of Lemma~\ref{lem:2} into the bounds in \eqref{eq:SigB} and \eqref{eq:Mu_B}, the proof is concluded.
%
%
%
\end{proof}
\begin{remark}
From the proof, it is straightforward to conclude that the upper-bounds given in Lemma~\ref{lem:2} are in general valid for any choice of \ac{irs} phase-shifts. This means that, for any $\bbeta\in\setR^N$, we have
\begin{subequations}
			\begin{align}
\mu_\Gamma \brc{\bbeta} &\leq  \xi_{\max} \brc{N} +  \kappa_{\mathrm{r}}  \bar{\alpha}_N N^2 , \\
\sigma_\Gamma^2 \brc{\bbeta} &\leq P_{\max}\brc{N}.
\end{align}
\end{subequations}
	\end{remark}
\begin{remark}
Lemma~\ref{lem:1} explains the logic behind the choice~of \ac{irs} phase-shifts in Proposition~\ref{th:0}. It is worth noting that the proposed phase-shifts in \eqref{eq:beta_n_star} do not necessarily maximize the average end-to-end \ac{snr}. Nevertheless, they are still good enough to guarantee the end-to-end channel hardening. 
\end{remark}

\subsection{Proof of Proposition~\ref{th:0}}
\label{proof:Prop}
The proof of Proposition~\ref{th:0} follows readily from Theorems~\ref{th:1} and \ref{th:2}: Using Theorem~\ref{th:1}, we represent the end-to-end \ac{snr} gain as $\Gamma = \mu_\Gamma \brc{\bbeta^\star}+\sigma_\Gamma \brc{\bbeta^\star} \tilde{\Gamma}$, where $\tilde{\Gamma}$ is a zero-mean and unit-variance generalized chi-square random variable denoting the centralized and normalized version of $\Gamma$. We now replace it in \eqref{eq:I} and use the Taylor series of the logarithm to write
\begin{subequations}
	\begin{align}
		\mathcal{C} &=\log_2 \left(1+  \rho M \mu_\Gamma\brc{\bbeta^\star}+ \rho M \sigma_\Gamma\brc{\bbeta^\star} \tilde{\Gamma} \right),\\
		&= \log_2 \left(1 +  \rho M \mu_\Gamma\brc{\bbeta^\star} \right)  + \log_2 \left(1+  \varsigma_N \tilde{\Gamma} \right)\\
		&= \log_2 \left(1 +\rho M \mu_\Gamma\brc{\bbeta^\star}  \right)  +  \varsigma_N \log_2\mathrm{e}  \; \tilde{\Gamma} + \epsilon_N, \label{eq:taylor}
	\end{align}
\end{subequations}
where $\varsigma_N$ is defined as\footnote{We use subscript $N$ to indicate its dependency on dimension $N$.}
\begin{align}
	\varsigma_N = \frac{ \rho M \sigma_\Gamma \brc{\bbeta^\star}   }{1 + \rho M \mu_\Gamma \brc{\bbeta^\star}  },
\end{align}
and $ \epsilon_N$ is a polynomial in $\varsigma_N\tilde{\Gamma}$ satisfying
\begin{align}
	\epsilon_N = \mathcal{O}\left(  {\varsigma_N^2 \tilde{\Gamma}^2} \right).
\end{align}
Invoking Theorem~\ref{th:2}, we can bound $\varsigma_N$ from above as
\begin{subequations}
	\begin{align}
	\varsigma_N &\leq \frac{ \rho M \sqrt{ P_{\max} \brc{N} }  }{1 + \rho M \brc{ \xi_{\min} \brc{N} +  \kappa_{\mathrm{r}}  \bar{\alpha}_N N^2} }\\
	&\stackrel{\dagger}{\leq} \frac{ \rho M \sqrt{ P_{\max} \brc{N} }  }{1 + \rho M \brc{ \alpha_{\mathrm{ d}} A_M  +  \kappa_{\mathrm{r}}  \bar{\alpha}_N N^2} }
\end{align}
\end{subequations}
where $\dagger$ follows from $\xi_{\min} \brc{N}\geq \alpha_{\mathrm{ d}} A_M $.

We now focus on scaling with $N$. Since $\lambda_{\max}$ grows \textit{sub-linearly} in $N$, we can conclude that there exist real scalars $\lambda_0$ and  $0\leq u <1$, such that $\lambda_{\max}$ is uniformly bounded by 
	\begin{align}
	\lambda^{\rm U} \brc{N} &= \lambda_0 N^{u},
\end{align}
From the constraint $A_{\rm IRS} = A_0 N^{1-q}$, we can further conclude that
	\begin{align}
	A_N &= \frac{A_0}{N^q},
\end{align}
for some real $A_0$ and $0\leq q <1$, where the constraint~in~\eqref{eq:const_IRS} guarantees that $u\geq q$. As a result, $P_{\max} \brc{N}$ is uniformly bounded from above by
	\begin{align}
		P^{\rm U}\brc{N} &= b N^{3+u - 4q}.
	\end{align}
These upper-bounds lead to this conclusion that $\varsigma_N$ is bounded as $\varsigma_N \leq \varsigma_N^{\rm U}$ for some $\varsigma_N^{\rm U}$ satisfying
\begin{align}
	\varsigma_N^{\rm U} = \mathcal{O}\brc{N^{\tfrac{u-1}{2}}}.
\end{align}
We now note that $\tilde{\Gamma}$ is zero-mean and unit-variance, and $0\leq u < 1$. Thus, $\epsilon_N \rightarrow 0$ as $N$ grows large, and hence $\mathcal{C}$ is well approximated by the first two terms in \eqref{eq:taylor}.

We now use the following lemma:
\begin{lemma}
	\label{lem:3}
	Let $F_0\brc{\bbeta}/\Lambda\brc{\bbeta}$ and $\sigma^2_\Gamma\brc{\bbeta}$ grow large. Then, we have
	\begin{align}\label{eq:V}
		\dfrac{\Gamma-\mu_\Gamma \brc{\bbeta}}{\sigma_\Gamma\brc{\bbeta} }\xrightarrow{\mathrm{ d}} \mathcal{N}\left( 0, 1\right).
	\end{align} 
\end{lemma}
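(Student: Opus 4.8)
The plan is to invoke a central-limit-type argument directly on the representation of $\Gamma$ established in the proof of Theorem~\ref{th:1}. Recall from \eqref{eq:Gamma_sum} that $\Gamma = \tfrac{\alpha_{\mathrm{d}} A_M}{M} V_0 + \tfrac{\alpha_{\mathrm{d}} A_M + M\tilde\sigma^2\brc{\bbeta}}{M} V_1$, where $V_0 = \sum_{m=1}^{M-1}\abs{r_m}^2$ is a central chi-square with $2M-2$ degrees of freedom and $V_1 = \abs{r_M}^2$ is a non-central chi-square with two degrees of freedom and non-centrality parameter $\lambda\brc{\bbeta} = F_0\brc{\bbeta}/\Lambda\brc{\bbeta}$, with $V_0$ and $V_1$ independent. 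The key structural observation is that as $F_0\brc{\bbeta}/\Lambda\brc{\bbeta}\to\infty$ and $\sigma^2_\Gamma\brc{\bbeta}\to\infty$, the random variable $\Gamma$ is, after centering and scaling, asymptotically dominated by the $V_1$ term (since $V_0$ has bounded mean $M-1$ and bounded variance $M-1$ for fixed $M$), and the non-central chi-square $V_1$ with a growing non-centrality parameter is itself asymptotically Gaussian.

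The steps I would carry out are as follows. First, write $V_1 = \abs{r_M}^2$ with $r_M = \mu_{\mathrm{r},M}\brc{\bbeta} + w$, where $w\sim\mathcal{CN}\brc{0,1}$; then $V_1 = \abs{\mu_{\mathrm{r},M}\brc{\bbeta}}^2 + 2\mathrm{Re}\set{\mu_{\mathrm{r},M}^*\brc{\bbeta} w} + \abs{w}^2$, and $\abs{\mu_{\mathrm{r},M}\brc{\bbeta}}^2 = \lambda\brc{\bbeta}$. Centering and normalizing, $\brc{V_1 - (1+\lambda\brc{\bbeta})}/\sqrt{1+2\lambda\brc{\bbeta}}$ converges in distribution to $\mathcal{N}\brc{0,1}$ as $\lambda\brc{\bbeta}\to\infty$: the dominant contribution is the linear term $2\mathrm{Re}\set{\mu_{\mathrm{r},M}^*\brc{\bbeta} w}$, which is exactly Gaussian with variance $2\lambda\brc{\bbeta}$, while the residual $\abs{w}^2 - 1$ has bounded variance and is asymptotically negligible after dividing by $\sqrt{\lambda\brc{\bbeta}}$. (Equivalently one may cite the classical fact that a non-central $\chi^2$ with fixed degrees of freedom and non-centrality $\to\infty$ is asymptotically normal, e.g.\ via its characteristic function.) Second, combine with $V_0$: since $\tfrac{\alpha_{\mathrm{d}} A_M + M\tilde\sigma^2\brc{\bbeta}}{M}$ multiplies $V_1$ and $\sigma^2_\Gamma\brc{\bbeta}\to\infty$ is driven by the $V_1$-weighted term (inspect \eqref{eq:sig_Gamma}: the terms $E\brc{\bbeta}\dbc{F_0\brc{\bbeta}+F_1\brc{\bbeta}}$ dominate when $E\brc{\bbeta}F_0\brc{\bbeta}\to\infty$), the contribution of the $V_0$-term to $\brc{\Gamma-\mu_\Gamma\brc{\bbeta}}/\sigma_\Gamma\brc{\bbeta}$ vanishes in probability. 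Then apply Slutsky's theorem to conclude $\brc{\Gamma-\mu_\Gamma\brc{\bbeta}}/\sigma_\Gamma\brc{\bbeta}\xrightarrow{\mathrm d}\mathcal{N}\brc{0,1}$.

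A cleaner alternative, which I would likely present as the main line, is to apply the Lindeberg--Feller central limit theorem as in \cite{wang2021massive,van2000asymptotic}: write $M\Gamma = \norm{\bh}^2$ directly as a sum over $m\in\dbc{M}$ of $\abs{h_m}^2$, but since $M$ is fixed this is not a triangular array in $m$; instead one applies the CLT to the underlying real and imaginary Gaussian components whose number does grow — namely, the representation $\Gamma$ as a quadratic form in the $N$-dimensional Gaussian vector $\tilde{\mathbf h}_{\mathrm r}$ (through $g_m\brc{\bbeta}$). Conditioning appropriately, $\Gamma$ is a quadratic-plus-linear functional of a growing-dimensional Gaussian vector, and a standard CLT for such functionals (verifying the Lindeberg condition via the bound $\lambda_{\max}\leq aN^u$ with $u<1$, which controls the ratio of the largest eigenvalue of the quadratic form to its Frobenius norm) gives asymptotic normality. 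I expect the main obstacle to be the bookkeeping that establishes which of the two regimes — the deterministic line-of-sight term dominating via $F_0\brc{\bbeta}$, or the correlated scattering term dominating via $E\brc{\bbeta}$ — controls $\sigma^2_\Gamma\brc{\bbeta}$, and checking the Lindeberg/negligibility condition uniformly in that split; the sub-linear growth hypotheses on $\lambda_{\max}$ and $A_{\rm IRS}$ from Proposition~\ref{th:0} are precisely what make this condition hold, so the lemma's two abstract hypotheses ($F_0/\Lambda\to\infty$ and $\sigma^2_\Gamma\to\infty$) must be shown to be the right distillation of those growth conditions.
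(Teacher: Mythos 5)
Your first route is, up to notation, exactly the paper's proof (Appendix~\ref{app:3}): starting from \eqref{eq:Gamma_1}, the paper writes $r_M = \tilde{r}_M + \mu_{\mathrm{r},M}\brc{\bbeta}$ and splits $\hat{\Gamma} = \Gamma/\sigma_\Gamma\brc{\bbeta}$ into a nonnegative chi-square part $\Gamma_0$ (your $V_0$ plus the $\abs{\tilde r_M}^2$ residual), an \emph{exactly Gaussian} cross term $\Gamma_1 = 2\Lambda\brc{\bbeta}\Re\set{\tilde{r}_M\mu_{\mathrm{r},M}^*\brc{\bbeta}}/\sigma_\Gamma\brc{\bbeta}$, and a deterministic constant; it then shows $\Gamma_0\to 0$ in mean square using the sandwich $2\Lambda\brc{\bbeta}F_0\brc{\bbeta}\leq\sigma_\Gamma^2\brc{\bbeta}\leq 2\Lambda\brc{\bbeta}F_1\brc{\bbeta}+\alpha_{\mathrm{d}}^2A_M^2$, and shows $\mathrm{Var}\,\Gamma_1 = 2\Lambda\brc{\bbeta}F_0\brc{\bbeta}/\sigma_\Gamma^2\brc{\bbeta}\geq F_0\brc{\bbeta}/F_1\brc{\bbeta}\to 1$ precisely because $F_0\brc{\bbeta}/\Lambda\brc{\bbeta}\to\infty$. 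Your sketch contains all of these ingredients, including the correct identification that \emph{both} hypotheses are needed (one to kill the central chi-square residuals, one to make the variance ratio tend to one), so on this route you are essentially reproducing the paper's argument.

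However, the ``cleaner alternative'' that you say you would present as the main line does not work, and you should drop it. Because $\bar{\mathbf{T}}$ in \eqref{a1} is rank one, the reflected signal enters every $h_m$ through the \emph{same} scalar linear functional of $\tilde{\mathbf{h}}_{\mathrm{r}}$; consequently $\Gamma$ depends on the $N$-dimensional Gaussian vector only through a rank-one quadratic form, i.e., a single non-central chi-square with two degrees of freedom for every $N$ (this is why Theorem~\ref{th:1} gives a generalized chi-square of fixed order $2M$). There is no growing sum of comparably small independent contributions over the $N$ reflecting elements, so a Lindeberg--Feller argument in $N$ (as in \cite{wang2021massive}) has nothing to bite on: the asymptotic normality here comes \emph{entirely} from the non-centrality parameter $\lambda\brc{\bbeta}=F_0\brc{\bbeta}/\Lambda\brc{\bbeta}$ growing large, not from any aggregation over $N$. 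Relatedly, the conditions $\lambda_{\max}\leq aN^u$ and the scaling of $A_{\rm IRS}$ belong to the proof of Proposition~\ref{th:0} (where they are used to verify that the two abstract hypotheses of this lemma hold for $\bbeta^\star$); importing them into the proof of the lemma itself would both weaken the statement and obscure the actual mechanism. Keep your first decomposition as the proof and discard the CLT-in-$N$ route.
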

\begin{proof}
The proof is given in the Appendix~\ref{app:3}.
\end{proof}

We first note that $\sigma^2_\Gamma\brc{\bbeta^\star}$ grows large with $N$. Moreover, $F_0\brc{\bbeta^\star} = \bar{\alpha}_N\kappa_{\mathrm{ r} } N^2$ and $\Lambda\brc{\bbeta^\star}$ is uniformly bounded from above with $ \mathcal{O}\brc{N^{1+u - 2q}}$ for some $0\leq u <1$. This concludes that $F_0\brc{\bbeta^\star}/\Lambda\brc{\bbeta^\star}$ grows large as $N$ increases. We hence use Lemma~\ref{lem:3} and conclude that $\tilde{\Gamma}$ converges in distribution to a zero-mean and unit-variance Gaussian random variable, as $N$ grows asymptotically large.

Finally by noting that $\bar{\mathbf{g}}_{\mathrm{ r}}^\trp \mPhi\brc{\bbeta^\star} = \mathbf{\bar{h}}_{\mathrm{r}}^\her$, we can write
  \begin{align}
E\brc{\bbeta^\star}  =  \bar{\alpha}_N\mathbf{\bar{h}}_{\mathrm{r}}^\her \mR \mathbf{\bar{h}}_{\mathrm{r}}.
 \end{align}
Substituting into \eqref{eq:taylor}, Proposition~\ref{th:0} is concluded after few lines of standard calculations.

\subsection{Derivation of Pseudo-Proposition~\ref{th:0-2}}
\label{sec:Proof_PPro}
From our numerical validations in Section~\ref{sec:Linear} we heuristically conclude that $\mathcal{C}$ is still well-approximated by a Gaussian random variable in the linearly scaling settings. Nevertheless, the variance of $\mathcal{C}$ in this case is not tightly approximated by Proposition~\ref{th:0}. From the proof of Proposition~\ref{th:0}, the inaccuracy comes from the fact that the necessary condition for Lemma~\ref{lem:3} is not satisfied and hence, $\tilde{\Gamma}$ does not converge in distribution to a zero-mean and unit-variance Gaussian random variable\footnote{This is why Proposition~\ref{th:0} excludes the strictly-linear case.}. 

The exact distribution of $\tilde{\Gamma}$ in this case can~be~derived~explicitly from Theorem~\ref{th:1}. We however follow our heuristic and approximate it by a zero-mean Gaussian random variable of a smaller variance. Following the proof of Lemma~\ref{lem:3}, we show in Appendix~\ref{app:4} that $\tilde{\Gamma}$ can be approximated in this case as 
\begin{align}
	\tilde{\Gamma} \approx \tilde{\Gamma}_\infty + \hat{\epsilon} \label{eq:approx_Gam}
\end{align}
where $\tilde{\Gamma}_\infty\sim \mathcal{N}\brc{0, {\sigma}_{\infty}^2 }$ with 
\begin{align}
	{\sigma}_{\infty}^2 = \frac{\kappa_{\mathrm{ r} }}{\kappa_{\mathrm{ r} } + \vartheta } \label{eq:sig_1_final}
\end{align}
and $\vartheta$ being given in \eqref{eq:vartheta}. The random variable $\hat{\epsilon}$ is further a centralized\footnote{That means $\hat{\epsilon}$ is zero-mean.} chi-square residual term. By replacing \eqref{eq:approx_Gam} in the Taylor expansion \eqref{eq:taylor} and ignoring the residual term as well as higher-order terms, the approximation is derived.

\section{Conclusions}
\label{sec:conc}
Even with highly correlated reflecting elements, a~large~\ac{irs} hardens the end-to-end channel of a \ac{mimo} system with~small transmit and receive array antennas. This conclusion follows from the main result of this study which shows that the mutual information between the input and output signals in an \ac{irs}-aided system is almost perfectly approximated by a Gaussian random variable, whose mean increases with the \ac{irs} size and whose variance vanishes to zero as the \ac{irs} size grows large. 

The results of this study indicate that the channel hardening property is rather generic in \ac{irs}-aided systems and is easily achieved by small transmit and receive array antennas. More interestingly, the phase-shifts of reflecting elements are not required to be updated frequently and should only be appropriately matched with the large-scale statistics of the wireless channels, i.e., \acp{aoa} and\acp{aod}. These findings reveal this interesting fact: Even with \acp{irs} of moderate size whose physical dimensions are rather significantly smaller than the wavelength, we can still harden the end-to-end \ac{mimo} channel between the transmitter and the receiver. 

Earlier investigations in the literature have demonstrated the promising performance of massive \ac{mimo} systems in various respects. Nevertheless, the complexity of implementing such systems left this technology challenging for commercial uses.  Along with earlier results in the literature, the result of this study indicate that using \acp{irs}, the key features of the massive \ac{mimo} technology can be achieved with rather small end-to-end dimensions. 

Although this work studies a basic setting,~the~analytical results can still be employed to investigate various fundamental properties of \ac{irs}-aided \ac{mimo} systems. An example is given in Section~\ref{sec:trade}, where we demonstrate the dimensional trade-off between the \ac{irs} and the \ac{bs}: The larger the \ac{irs} is, the smaller the \ac{bs} needs to be in order to achieve a target performance. Similar investigations and further extensions of the results to more advanced settings, and under more realistic assumptions, are interesting study directions. They are however out of the scope of this particular paper and are left as potential directions for future work.

\appendices
\section{Proof of Lemma~\ref{lem:1}}
\label{app:A}
Considering the definition of $F_0\brc{\bbeta}$, one can write
			\begin{align}
	F_0\brc{\bbeta} &=\bar{\alpha}_N\kappa_{\mathrm{ r} }
	\abs{
\sum_{n=1}^{N} 	\mathrm{e}^{
\jj \brc{\Pi_n\left( \varphi_{\mathrm{r}1} ,\theta_{\mathrm{r}1} \right)  + \Pi_n\left( \varphi_{\mathrm{t}1} ,\theta_{\mathrm{t}1} \right) - \beta_n}
}
}^2.
\end{align}
The expression on the right hand side includes the amplitude of a sum whose summands are complex numbers on the unit circle. As the result, the amplitude of the sum is maximized by setting all the summands in-phase. This is obtained by setting $\beta_n = \beta_n^\star$ for $\beta_n^\star$ given in \eqref{eq:beta_n_star}. In this case, the sum adds to $N$ and $F\brc{\bbeta^\star}$ is given by the expression given in Lemma~\ref{lem:1}.

\section{Proof of Lemma~\ref{lem:2}}
\label{app:B}
Let $\btau\brc{\bbeta}  = \mPhi^\her\brc{\bbeta} \mathbf{g}_{\mathrm{r}}^*$. Hence, $E\brc{\bbeta}$ can be written as
			\begin{align}
	E\brc{\bbeta} &= \bar{\alpha}_N\;
	\btau^\her\brc{\bbeta}  {\mR} \btau \brc{\bbeta} \label{eq:e_neu}
\end{align}
Since every entry of $\btau\brc{\bbeta}$ lies on the unit circle, we can write $\norm{\btau\brc{\bbeta}}^2 = N$ for any $\bbeta\in\setR^N$. Consequently, the function
\begin{align}
	Q\brc{\bbeta} = \frac{1}{N} \btau^\her\brc{\bbeta} \mR \btau\brc{\bbeta}
\end{align}
determines the \textit{Rayleigh quotient} of $\mR$ at $\btau\brc{\bbeta}$. We now use the fact that the {Rayleigh quotient} is bounded in terms of the eigenvalue of $\mR$, as follows \cite{horn2013matrix}
\begin{align}
	\lambda_{\min} \leq Q\brc{\bbeta} \leq \lambda_{\max},
\end{align}
where $\lambda_{\min}$ and $\lambda_{\max}$ are the minimum and maximum eigenvalue of $\mR$, respectively. Substituting into \eqref{eq:e_neu}, Lemma~\ref{lem:2} is concluded.
%

\section{Proof of Lemma~\ref{lem:3}}
\label{app:3}
We start the proof by considering \eqref{eq:Gamma_1}. Using the definition of  $\Lambda\brc{\bbeta}$ in Theorem~\ref{th:1}, we have
\begin{align}
	\Gamma = \frac{\alpha_{\mathrm{ d}} A_M }{M}  \sum_{m=1}^{M-1} \abs{r_m}^2 + \Lambda\brc{\bbeta}  \abs{r_M}^2.
\end{align}
Here, $r_m$ for $m\in\dbc{M-1}$ are independent complex Gaussian random variables with zero mean and unit variance,~and~$r_M$ is a complex unit-variance Gaussian random variable independent of $r_m$ for $m\in\dbc{M-1}$. The mean of $r_M$, i.e., $\mu_{\mathrm{r},M}\brc{\bbeta}$, is given in \eqref{eq:mu_r_M}. Using the definitions given in Theorem~\ref{th:1}, it can be shown that
	\begin{align}
	\abs{\mu_{\mathrm{r},M}\brc{\bbeta}} = \sqrt{\frac{F_0 \brc{\bbeta} }{\Lambda\brc{\bbeta}} }.
\end{align}

Let $r_M = \tilde{r}_M + \mu_{\mathrm{r},M}\brc{\bbeta}$ where $ \tilde{r}_M$ is the centralized form of $r_M$. We can hence write
\begin{align}
	\Gamma = T_0  + 2 \Lambda\brc{\bbeta}  \Re\set{\tilde{r}_M \mu_{\mathrm{r},M}^*\brc{\bbeta} } + \abs{\mu_{\mathrm{r},M}\brc{\bbeta}}^2 \label{eq:Gamma_T}
\end{align}
where $T_0$ is defined as
\begin{align}
	T_0 = \frac{\alpha_{\mathrm{ d}} A_M }{M}  \sum_{m=1}^{M-1} \abs{r_m}^2 + \Lambda\brc{\bbeta}  \abs{\tilde{r}_M}^2.
\end{align}

We now consider the normalized \ac{snr} gain, i.e., 
\begin{align}
	\hat{\Gamma} &= \frac{\Gamma}{\sigma_\Gamma\brc{\bbeta}}
\end{align}
By replacing $\Gamma$ with the expression in \eqref{eq:Gamma_T}, we have
\begin{align}
	\hat{\Gamma} &= \Gamma_0 + \Gamma_1 + c \label{eq:Decomp}
\end{align}
where $\Gamma_0$ and $\Gamma_1$ are random expressions defined as
\begin{subequations}
	\begin{align}
	\Gamma_0 &= \frac{T_0}{\sigma_\Gamma\brc{\bbeta}}  \\
	\Gamma_1 &= 2 \frac{\Lambda\brc{\bbeta}}{\sigma_\Gamma\brc{\bbeta}}  \Re\set{\tilde{r}_M \mu_{\mathrm{r},M}^*\brc{\bbeta} }
\end{align}
\end{subequations}
and $c$ is a deterministic constant given by
\begin{align}
	c = \frac{\abs{\mu_{\mathrm{r},M}\brc{\bbeta}}^2}{\sigma_\Gamma\brc{\bbeta}}.
\end{align}
Despite its complicated form of $\sigma^2_\Gamma\brc{\bbeta}$, we can use the fact that $F_0\brc{\bbeta} \leq F_1 \brc{\bbeta}$ in Theorem~\ref{th:1}, and bound $\sigma^2_\Gamma\brc{\bbeta}$ as
\begin{align}
	2 \Lambda\brc{\bbeta} F_0\brc{\bbeta} \leq \sigma^2_\Gamma\brc{\bbeta} \leq 2 \Lambda\brc{\bbeta} F_1\brc{\bbeta} + \alpha_{\mathrm{ d}}^2 A_M^2 .
\end{align}	
Noting that $\alpha_{\mathrm{ d}}^2 A_M^2$ is fixed, we can conclude that growth of $\sigma^2_\Gamma\brc{\bbeta}$ guarantees that $ \Lambda\brc{\bbeta} F_0\brc{\bbeta}$ grows large.

We use the upper bound to write
\begin{subequations}
	\begin{align}
	\Gamma_0   &\leq  \frac{T_0}{2\sqrt{\Lambda \brc{\bbeta} F_0\brc{\bbeta} }}\\
	&= \frac{\displaystyle \alpha_{\mathrm{ d}} A_M \sum_{m=1}^{M-1} \abs{r_m}^2 }{2M \sqrt{\Lambda \brc{\bbeta} F_0\brc{\bbeta}}}   + \sqrt{ \frac{ \Lambda\brc{\bbeta} }{ F_0\brc{\bbeta} } } \abs{\tilde{r}_M}^2 .
\end{align}
\end{subequations}
Now, let $\sigma^2_\Gamma\brc{\bbeta}$  and $ F_0\brc{\bbeta} /  \Lambda\brc{\bbeta}$ grow large. We can then conclude that the upper bound in this case converges to zero in the \ac{mse}. Since $\Gamma_0 \geq 0$, we can conclude that $\Gamma_0$ converges to zero.

The expression $\Gamma_1$ is further a real-valued zero-mean Gaussian random variable whose variance is given by
\begin{align}
	\sigma_1^2 &=  \frac{2\Lambda^2\brc{\bbeta}}{\sigma_\Gamma^2\brc{\bbeta}}  \abs { \mu_{\mathrm{r},M}\brc{\bbeta} }^2. \label{sig_1}
\end{align}
Using the lower and upper bounds on the variance of $\Gamma$, we can bound $\sigma_1^2$ as
\begin{align}
	\frac{F_0\brc{\bbeta}}{F_1\brc{\bbeta}} \leq \sigma_1^2 \leq 1.
\end{align}
Noting that 
\begin{align}
	F_1\brc{\bbeta} = F_0\brc{\bbeta} + \Lambda \brc{\bbeta} - \frac{\alpha_{\mathrm{ d}} A_M }{M}
\end{align}
we can further write
\begin{align}
\lim_{\tfrac{F_0\brc{\bbeta}}{\Lambda\brc{\bbeta}} \rightarrow \infty }	\frac{F_0\brc{\bbeta}}{F_1\brc{\bbeta}} = 1
\end{align}
which means that $\sigma_1^2 = 1$.  As a result, by growth of $\sigma^2_\Gamma\brc{\bbeta}$ and $F_0\brc{\bbeta} /  \Lambda\brc{\bbeta}$, the normalized \ac{snr} gain $\hat{\Gamma}$ converges in distribution to a unit-variance real Gaussian random variable. 

Finally, by noting that $\brc{\Gamma - \mu_\Gamma} / \sigma_\Gamma$ is the centralized form of $\hat{\Gamma}$, we can conclude that 
	\begin{align}
	\dfrac{\Gamma-\mu_\Gamma \brc{\bbeta}}{\sigma_\Gamma\brc{\bbeta} }\xrightarrow{\mathrm{ d}} \mathcal{N}\left( 0, 1\right).
\end{align}
This concludes the proof. 

\section{Approximated SNR for Linear Scaling}
\label{app:4}
For a linearly scaling \ac{irs} area and $\lambda_{\max}$, the conditions in Lemma~\ref{lem:3} are not satisfied, and hence the given limit is not valid. Considering the proof in Appendix~\ref{app:3}, this follows from the fact that due to linear scaling $\Gamma_0$ does not converge to zero, and accordingly the limit of $\sigma_1^2$ is not one. Considering \eqref{eq:Decomp} in Appendix~\ref{app:3}, we can write
\begin{align}
	\tilde{\Gamma} &= \hat{\Gamma} - \frac{\mu_\Gamma\brc{\bbeta} }{\sigma_\Gamma\brc{\bbeta}} = \Gamma_1 + \hat{\epsilon},
\end{align}
where $\hat{\epsilon}$ is a centralized chi-square random variable\footnote{$\hat{\epsilon}$ is zero-mean, since $\tilde{\Gamma}$ is zero-mean}.~The~random variable $\Gamma_1$ is further zero-mean Gaussian whose variance is $\sigma_1^2$ given in \eqref{sig_1}.

Considering linear scaling, i.e., $q=u = 1$, $\mR$ is a rank-one matrix and hence we can write $\mR = \mathbf{r} \mathbf{r}^\her$ for some $\mathbf{r}\in\setC^N$. As a result,
\begin{align}
	\mathbf{\bar{h}}_{\mathrm{r}}^\her \mR \mathbf{\bar{h}}_{\mathrm{r}} = \abs{\mathbf{h}^\her \mathbf{r} }^2.
\end{align}
Noting that entries of $\mathbf{h}^\her$ lie equidistantly on the unit-circle, one can approximately write $\mathbf{\bar{h}}_{\mathrm{r}}^\her \mR \mathbf{\bar{h}}_{\mathrm{r}} \approx 0$ for a typical\footnote{For instance, for the covariance matrix $\mR = \boldsymbol{1}_N$, $\mathbf{r}$ is a vector of all-ones, and hence this approximation is almost exact.} $\mathbf{r}$.

We further note that $\bar{\alpha}_N N^2$ converges to a constant with $q=1$. Considering this fact and using the above approximation, the variance $\sigma_1^2$ is approximated with $\sigma_{\infty}^2$  given in \eqref{eq:sig_1_final}. 

\begin{acronym}
\acro{mimo}[MIMO]{multiple-input multiple-output}
\acro{miso}[MISO]{multiple-input single-output}
\acro{siso}[SISO]{single-input single-output}
\acro{csi}[CSI]{channel state information}
\acro{awgn}[AWGN]{additive white Gaussian noise}
\acro{iid}[i.i.d.]{independent and identically distributed}
\acro{ut}[UT]{user terminal}
\acro{bs}[BS]{base station}
\acro{snr}[SNR]{signal-to-noise ratio}
\acro{rf}[RF]{radio frequency}
\acro{los}[LoS]{line of sight}
\acro{nlos}[NLoS]{non-line of sight}
\acro{irs}[IRS]{intelligent reflecting surface}
\acro{ula}[ULA]{uniform linear array}
\acro{aoa}[AoA]{angle-of-arrival}
\acro{aod}[AoD]{angle-of-departure}
\acro{mrt}[MRT]{maximum ratio transmission}
\acro{noma}[NOMA]{non-orthogonal multiple access}
\acro{mse}[MSE]{mean squared error}
\end{acronym}

\bibliographystyle{IEEEtran}
\bibliography{IEEEabrv,references}

\end{document}